\documentclass[11pt]{article}
\usepackage[paperwidth=8.5in,paperheight=11in,top=1in, bottom=1in, left=1in, right=1in]{geometry} 

\usepackage{amsfonts,amsthm,amssymb,amsmath}
\allowdisplaybreaks
\usepackage{xcolor}
\usepackage{tikz}
\usepackage{bm}

\usepackage{array}
\usepackage{mathtools}
\mathtoolsset{showonlyrefs=true}

\usepackage[numbers]{natbib}

\usepackage{hyperref} 
\hypersetup{
	colorlinks   = true,
	citecolor    = blue,
	linkcolor    = red,
	urlcolor     = magenta
}

\newtheorem{theorem}{Theorem}[section]

\newtheorem{proposition}[theorem]{Proposition}
\newtheorem{remark}[theorem]{Remark}

\newtheorem{lemma}[theorem]{Lemma}

\newtheorem{prob}[theorem]{Problem}
\newcommand{\Pb}{\mathbb{P}}

\newtheorem{corollary}[theorem]{Corollary}

\newcommand{\Eb}{\mathbb{E}}
\newcommand{\Fb}{\mathbb{F}}
\newcommand{\Rb}{\mathbb{R}}

\newcommand{\Ac}{\mathcal{A}}
\newcommand{\Fc}{\mathcal{F}}
\newcommand{\Jc}{\mathcal{J}}
\newcommand{\Lc}{\mathcal{L}}
\newcommand{\Pc}{\mathcal{P}}
\newcommand{\Tc}{\mathcal{T}}
\newcommand{\LP}{\mathcal{L}^2(\Pb)}
\newcommand{\LPL}{\mathcal{L}^2_{loc}(\Pb)}
\newcommand{\Uc}{\mathcal{U}}
\newcommand{\Vc}{\mathcal{V}}

\usepackage{bbm}
\newcommand{\In}{\mathbbm{1}}

\begin{document}

\title{\Large \scshape Quadratic  Hedging  for  Sequential Claims  \\with Random Weights in Discrete Time}

\author{
	Jun Deng\thanks{School of Banking and Finance,
		University of International Business and Economics, Beijing, China. Email: jundeng@uibe.edu.cn}  
	\and 
	Bin Zou\thanks{Corresponding author.   Department of Mathematics, University of Connecticut, 341 Mansfield Road U1009, Storrs Connecticut 06269-1009, USA. Email: bin.zou@uconn.edu. Phone: +1-860-486-3921.}
}

\date{First Version: May 12, 2020\\This Version: \today }
\maketitle

\vspace{-2ex}
\begin{abstract}
We study a quadratic hedging problem for a sequence of contingent claims with random weights in discrete time. 
We obtain the optimal hedging strategy explicitly in a recursive representation,  \emph{without} imposing the non-degeneracy (ND) condition on the model and square integrability on hedging strategies. 
We relate the general results to hedging under random horizon  and fair pricing in  the quadratic sense. 
We illustrate the significance of our results in an example in which the ND condition fails.
\end{abstract}

\noindent
\textbf{Keywords}:
binomial model; mean-variance; pricing; quadratic hedging;  sequential claims


\section{Introduction}
\label{sec:intro}

Quadratic hedging, also called variance-optimal hedging, is an important topic in finance, with roots tracing back to the classical mean-variance criterion in modern portfolio theory. 
In a pioneering  work Schweizer \cite{schweizer1995variance}, the author seeks the optimal hedging strategy to  minimize the expected quadratic hedging error of a contingent claim in a discrete time model. 
In this paper, 
we generalize the work of \cite{schweizer1995variance} in two noticeable directions: (1) the hedging subject is no longer a single claim, but a \emph{sequence} of claims with random weights, and (2) the risky asset is \emph{not} required to satisfy the non-degeneracy (ND) condition (see its definition on p.4 in \cite{schweizer1995variance}). 

In standard quadratic hedging problems,
an investor constructs a portfolio from tradable assets 
to hedge the risk exposure of  a contingent claim $H_N$, which matures at time $N$. 
Let us consider a discrete time market consisting of one risk-free asset with zero interest rate  and one risky asset with price denoted by $S=(S_n)_{n=0,1,2,\cdots,N}$. 
Let $\xi = (\xi_n)_{n=1,2,\cdots,N}$ denote the investor's portfolio strategy, where $\xi_n$ is the number of shares invested in the risky asset from $n-1$ to $n$.
The investor aims to find an optimal hedging strategy $\xi^\star$ that minimizes the quadratic hedging error:
\begin{align}
	\label{eq:prob}
	V(c) := \min_{\xi \in \Ac} \; \Eb \big[(H_N - c - G_N(\xi))^2\big], 
\end{align}
where constant $c$ is the initial capital (or interpreted as the hedging cost), 
$G_N(\xi)  = \sum_{n=1}^N  \xi_n (S_n - S_{n-1})$ is the cumulative gain process at time $N$ under strategy $\xi$, and $\Ac$ is the set of admissible strategies.

Solving  Problem \eqref{eq:prob}  in a discrete time model dates back to \cite{schal1994quadratic} and \cite{schweizer1995variance}.  In \cite{schal1994quadratic, schweizer1995variance}, a necessary technical assumption is that
$(\Eb_n[\Delta S_{n+1}])^2/\Eb_n[(\Delta S_{n+1})^2]$ must be bounded by a constant less than 1. 
Here,  $\Delta S_{n+1} = S_{n+1} - S_n$  and  $\Eb_n[\cdot]$ denotes the conditional expectation operator (given $\Fc_n$).
Such an assumption is called the \emph{non-degeneracy} (ND) condition in  \cite{schweizer1995variance}.
With the ND condition in place, 
\cite{schweizer1995variance} presents a general analysis of Problem \eqref{eq:prob} and  obtains the optimal strategy  in a recursive form.
In both \cite{schal1994quadratic, schweizer1995variance} (and most subsequent works on this topic), 
the existence of a solution to Problem \eqref{eq:prob} is obtained using the Hilbert projection theorem, which requires the subspace of $\{G_N(\xi) : \xi \in \Ac \}$ to be closed and in turn needs the ND condition; while on the other hand, 
finding the optimal strategy is based on the Kunita–Watanabe decomposition. 
To overcome the restriction of the ND condition, \cite{melnikov1999mean} studies the conditional version of Problem \eqref{eq:prob}, replacing $\Eb[\cdot]$ by $\Eb[\cdot|\Fc_0]$ in the criterion, where $\Fc_0$ is the sigma field at time 0 and may be non-trivial. 
\cite{vcerny2004dynamic} applies dynamic programming to derive the solution of Problem \eqref{eq:prob}. 
\cite{vcerny2009hedging} utilizes the sequential regression method to find the optimal strategy.

The continuous version of Problem \eqref{eq:prob} is first studied by \cite{duffie1991mean}, in which $H_N$ is treated as a non-tradable asset and a hedger dynamically trades  another \emph{correlated} asset in a standard geometric Brownian motion model. 
\cite{schweizer1992mean} extends the work of \cite{duffie1991mean} by considering a general claim that may depend on both tradable and non-tradable assets. Further extensions are done in \cite{schweizer1996approximation, pham1998mean, vcerny2007structure} 
under a general semimartingale framework.  There is an extensive body of literature by now on this topic, and  we refer readers to the survey papers \cite{schweizer2001guided, schweizer2010mean} and the references therein for a detailed overview of pricing and hedging under a quadratic criterion in continuous time up to 2010.  Recent contributions to  quadratic hedging problems (in both discrete and continuous time) include approximations to expected utility  \cite{markowitz2014mean}, 
a mean-field formulation \cite{yi2014mean}, 
Monte-Carlo simulations \cite{cong2016multi}, 
trading constraint  \cite{cui2017mean},
empirical robust test \cite{gotoh2018robust},
and large portfolios   \cite{ao2019approaching}.

In this article, we extend the analysis of Problem \eqref{eq:prob} from a single claim $H_N$ to a sequence of claims $H = (H_n)_{n = 0,1,\cdots,N}$ with random weights $\omega = (\omega_n)_{n=0,1,\cdots,N}$, where both $H$ and $\omega$  are adapted to a given filtration and $\omega_n \in [0,1]$ for all $n$. 
Precisely, we consider the following quadratic hedging problem:
\begin{align}
	\label{eq:main}
	V(c) := \min_{\xi \in \Ac} \; \Jc(\xi; c) = \min_{\xi \in \Ac} \; \sum_{n=0}^{N} \, \Eb \left[\omega_n \big(H_n - c - G_n(\xi)\big)^2\right], \;\;
\end{align}
where $c \in \Rb$ is the initial capital (hedging cost) and $\Ac$ is the set of admissible strategies (precise definition given later).
It is clear that Problem \eqref{eq:prob} is a special case of our Problem \eqref{eq:main}, by simply taking $\omega_0 =\cdots = \omega_{N-1}=0$ and $\omega_N=1$.  The need of studying the extended Problem \eqref{eq:main} arises naturally from practical problems in insurance and finance, such as hedging under random horizon and minimizing index tracking error.  Further motivations of Problem \eqref{eq:main} are discussed in Section \ref{sec_interpretation_prob}.

Our paper contributes to the literature in several ways. 
First, we do not impose the ND condition on the price process $S$, 
which is required in most of the related works (see, e.g., \cite{schal1994quadratic, schweizer1995variance, vcerny2009hedging}).
In particular, in the example of Section \ref{secexam},  the ND condition fails, and consequently the standard results (as in \cite{schal1994quadratic, schweizer1995variance, vcerny2009hedging}) cannot be applied to solve Problem \eqref{eq:prob} or \eqref{eq:main}. 
However, our analysis still leads to a complete solution of Problem \eqref{eq:main}, i.e., we 
are able to obtain the optimal hedging strategy $\xi^\star$, the value function $V(c)$, and the minimum hedging cost $c^\star = \arg\min \, V(c)$ in closed-form. 
An important step to relax the ND condition is to consider a larger class of admissible strategies. 
To be precise, an admissible strategy $\xi$ in \cite{schal1994quadratic, schweizer1995variance} satisfies $\Eb[(\xi_n \Delta S_n)^2] < \infty$; but we do not require any square integrability condition on the strategy $\xi$. In fact, $\xi^\star \Delta S$ in our Example \ref{secexam} is only \emph{locally} square integrable.
Second, Problem \eqref{eq:main} considered in this paper is general enough to include the standard problem formulated in \eqref{eq:prob} as a special case and, to the best of our knowledge, has not been studied before.

We organize the rest of this paper as follows. In Section \ref{sec:prob}, we formulate the main problem. Section \ref{secmainresult} summarizes the main results. 
We discuss quadratic pricing in Section \ref{sec:pricing}.
In Section \ref{secexam}, we consider a concrete example in which the ND condition fails. 
Section \ref{secProofs} collects the proof to our main theorem.
In the online appendix, we provide technical proofs and one extra example with random horizon.

\section{The Problem}
\label{sec:prob}

We consider a discrete time market model with $N$ periods (where $N$ is a positive integer). For future convenience, we introduce two index sets of time by $\Tc = \{0,1,2,\cdots,N\}$ and $\Tc^+ = \{1,2,\cdots,N\}$.
Throughout the paper we fix a filtered probability space $(\Omega, {\cal F}, {\mathbb{F}}=({\cal F}_n)_{n \in \Tc}, \Pb)$, which supports all the random objects introduced in the sequel.  
$\Eb$ denotes taking  expectation under measure $\Pb$.
We consider a representative  investor who can trade a risk-free asset (bond) and a risky asset (stock) in this market. 
For convenience, we normalize the risk-free asset and set the interest rate to be zero. 
The price process of the risky asset is given by an $\mathbb{F}$-adapted and  \textit{locally
	square integrable} process $S:=(S_n)_{n \in \Tc}$ in the sense of \cite{vcerny2009hedging}, i.e.,  $\Eb[(\Delta S_n)^2 \vert {\cal F}_{n-1}]<+\infty$ for all $n  \in \Tc^+$. Here, we denote the price increment $\Delta S$ by $\Delta S_n = S_n - S_{n-1}$ (with $\Delta S_0 = 0$).
Introduce $\LP$ (resp. $\LPL$) as the set of all (resp. locally) square integrable random variables under $\Pb$. 
Denote $\Pc(\mathbb{F})$ the set of all $\mathbb{F}$-predictable processes, i.e., if $\psi=(\psi_n)_{n \in \Tc^+} \in \Pc(\Fb)$, we have $\psi_n \in \Fc_{n-1}$. We set $\psi_0 = 0$ for any predictable process $\psi$ unless stated otherwise.

A hedging strategy is a \emph{predictable} process $\xi = (\xi_n)_{n \in \Tc^+} \in \Pc(\Fb)$, where $\xi_n$ is the number of shares in the risky asset held by the investor from time $n-1$ to time $n$. 
The cumulative gain process of strategy $\xi$ is denoted by  $G(\xi) = (G_n(\xi))_{n \in \Tc}$.
A strategy $\xi$ is called \emph{self-financing} if 
\begin{align}
	\label{eq:G}
	G_n(\xi) = \sum_{i=1}^n \, \xi_i \, \Delta S_i, \quad n \in \Tc^+, \quad \text{ and } \quad 
	G_0(\xi) = 0.
\end{align}
A strategy $\xi$ is called  \emph{admissible} if it is predictable and self-financing. Denote the admissible set by $\Ac$. 

In our setup, the investor's risk exposure  is modeled by a \emph{sequence} of  contingent claims $H$ with random weights $\omega$.
Denote such a sequential risk by  $\Fb$-adapted processes $H = (H_n)_{n \in \Tc}$ and $\omega=(\omega_n)_{n \in \Tc}$, where claim $H_n  \in \LP$ and weight $\omega_n \in [0,1]$ for all $n$. 
We may still call $(H, \omega)$ a (contingent) claim for short.

\begin{prob}\label{our_problem}
	An investor, endowed with initial capital $c$, seeks an optimal hedging strategy that solves the following quadratic hedging problem for a sequence of randomly weighted claim $(H,\omega)$:
	\begin{align}
		\label{eq:main_1}
		V(c) := \min_{\xi \in \Ac} \; \Jc(\xi; c) = \min_{\xi \in \Ac} \; \sum_{n=0}^{N} \, \Eb \left[\omega_n \big(H_n - c - G_n(\xi)\big)^2\right], \;\;\;
	\end{align}
	where $G_n$ is given by \eqref{eq:G}.
	We call a solution $\xi^\star=\xi^\star(c)$ to Problem \eqref{eq:main_1} an optimal hedging strategy and $V(c)$ the value function or the minimum hedging error.
\end{prob}

Some remarks on the setup of Problem \eqref{eq:main_1} are in order.
\begin{remark}\label{remark_problem2}
	$(1)$ In this work, we do not impose the so-called ND condition on the price process $S$ of the risky asset, which is required in almost all existing works (see, e.g., \cite{schal1994quadratic, schweizer1992mean,  schweizer1995variance, schweizer1996approximation, vcerny2009hedging}).  The  example  of Section \ref{secexam} shows that  the optimal strategy $\xi^\star$ exists even though  $S$ there fails to satisfy the ND condition.
	\\
	$(2)$ \cite{melnikov1999mean} also considers a quadratic hedging problem without imposing the ND condition. There are several main differences between \cite{melnikov1999mean} and this paper. First, the hedging problem in \cite{melnikov1999mean} is to minimize $\Eb[(H_N - c - G_N(\xi))^2 | \Fc_0]$, a conditional version of Problem \eqref{eq:prob}, where $\Fc_0$ may \emph{not} be trivial. 
	Our main focus is Problem \eqref{eq:main_1}.
	Second, $S \in \Lc^2(\Pb)$ in \cite{melnikov1999mean}, but $S \in \Lc^2_{loc}(\Pb)$ in this paper.
	Third, the main results in \cite{melnikov1999mean} reply on a technical assumption that $V(c) < \infty$ (see Lemma 3 therein). In comparison, we prove $V(c) < \infty$ directly for Problem \eqref{eq:main_1} (see Part 2 of the proof to Theorem \ref{maintheorem1}).
	\\
	$(3)$ Problem \eqref{eq:main_1} in this paper is always well-posted, because the strategy of investing only in the risk-free asset (i.e., $\xi_n=0$ for all $n$) is feasible.
	\\
	$(4)$ Following \cite{vcerny2009hedging}, we only require $S$ to be locally square integrable (i.e., $S \in \LPL$). In comparison, the standard literature (e.g., \cite{schal1994quadratic,  schweizer1995variance}) assumes $S$ is square integrable (i.e.,  $S \in \LP$). 
	From the  example  of Section \ref{secexam} , we observe that $\xi^* \Delta S$ (and hence the gains process $G$) is not necessarily square integrable. 
	\\
	$(5)$ The assumption of random weights $\omega_n \in [0,1]$ can be relaxed, by scaling, to that $\omega_n$'s are bounded.
\end{remark}

\subsection{Applications of the Main Problem}
\label{sec_interpretation_prob}

In this subsection, we discuss several applications of Problem \eqref{eq:main_1} in finance and insurance.

First, we cast the framework of Problem \eqref{eq:main_1} for an index tracking problem, which is a popular form of passive fund management (see \cite{beasley2003evolutionary} and \cite{ruiz2009hybrid}). 
Here, let $H$ represent a \emph{non-tradable} market index that a fund manager wants to track by trading 
a related Exchange Traded Fund (ETF). For instance, to track the S\&P 500 index, the manager may consider Fidelity FNILX, Vanguard VOO, SPDR SPY, iShares IVV, or Schwab SWPPX. 
In \ref{sec_multiple}, we extend the analysis to multiple risky assets, and the results there allow the fund manager to trade multiple stocks and/or ETFs to track the target index.
The manager evaluates the tracking performance at \emph{multiple} time points up to some terminal time $N$, where performance is measured as the expected (weighted) square of tracking error. Often, the manager assigns equal weights to all points (i.e., $\omega_n = 1/N$); however, the setup in Problem \eqref{eq:main_1} also allows for more sophisticated and random weight assignment.

Second, the study of Problem \eqref{eq:main_1} can help insurance companies better price and hedge the risk exposure of certain equity-linked insurance policies. 
Here, we explain its application for a variable annuity (VA) policy with withdrawal guarantees (see \cite{dai2008guaranteed}). 
Denote by $G_n$ and $W_n$ respectively  the guaranteed amount upon withdrawal and the number of withdrawals at time $n$, then the risk exposure to the insurance company at time $n$ is given by $H_n:=W_n \cdot G_n$.
In practice, the guaranteed amount $G$ is often the minimum of a deterministic bound and an equity index. To hedge such VA contracts, the insurance company invests in the financial market with the purpose to minimize the aggregate of the expected quadratic hedging errors. 
The assignment of (random) weights $\omega_n$, $n \in \Tc$, may take into account the surrender (withdrawal) behavior of the policy holders. In the actuarial science literature (see \cite{milevsky2006financial} and \cite{bauer2008universal}), three common choices on the surrender behavior are: (1) no surrender, (2) 	static surrender, and (3) dynamic surrender. 
If there is no surrender, one may set $\omega_0 = \cdots = \omega_{N-1}=0$ and $\omega_N=1$. 
If surrender is static,  a known distribution can be used to model the surrender time and one may set the random weights as the probabilities of the chosen distribution.  
If surrender is dynamic, the surrender time $\tau$ is now a decision variable and one needs to solve an optimal stopping time first, which leads to the next formulation in \eqref{eq:ran_prob}.

Last, quadratic hedging problems under random horizon can be seen as a special case of Problem \eqref{eq:main_1}, although they can also be reduced to standard quadratic hedging problems.
To wit, let $\tau$ denote a random  exit time, a positive $\cal F$-measurable random variable taking values in $\Tc$. Consider a quadratic hedging problem with random horizon $\tau$ as follows:
\begin{align}
	\label{eq:ran_prob}
	\min_{\xi \in \Ac} \; \Eb\left( H_\tau - c - G_\tau(\xi) \right)^2, \quad H_n \in \LP \text{ for all } n \in \Tc.
\end{align}
It is easy to see that Problem \eqref{eq:ran_prob} is equivalent to Problem \eqref{eq:main_1} given ${ \omega_n = \Pb(\tau = n | \Fc_n)}$ for all $n \in \Tc$.	
At first glance, Problem \eqref{eq:ran_prob} may not have a solution since the stopped market $S^\tau$ may admit arbitrage opportunities (see, e.g., \cite{aksamit2017no, aksamit2018no, choulli2017no}). 
In the extra example in the online appendix, we solve Problem \eqref{eq:ran_prob} when the stopped market $S^\tau$ does have arbitrage opportunities. 
The formulation of Problem \eqref{eq:ran_prob} is of particular interest in insurance. 
For example, we may interpret $H_\tau$ as the payment of a life insurance contract, liquidated  at the random death time $\tau$, and consider an insurer who trades a longevity bond to hedge such a risk in discrete time. 

\section{Main Results}
\label{secmainresult}

In this section, we present the main results of this paper, a closed-form solution to Problem \eqref{eq:main_1}, in Theorem \ref{maintheorem1}. 
We first derive a sufficient optimality condition of Problem \eqref{eq:main_1} in the following proposition.

\begin{proposition}
	\label{prop:op}
	An admissible hedging strategy $\xi^\star$ is optimal to Problem \eqref{eq:main_1} if it satisfies, for all $i \in \Tc^+$, that  
	\begin{align}
		\label{eq:op_cond}
		\Eb    \left[ \left(\sum_{n=i}^{N} \omega_n (H_n - c - G_n(\xi^\star))\right)\Delta S_i\Big\vert {\cal F}_{i-1}\right] = 0.
	\end{align}
\end{proposition}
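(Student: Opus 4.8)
The plan is to use a standard variational (first-order condition) argument, exploiting the fact that $\Jc(\xi;c)$ is a quadratic functional of $\xi$ and the admissible set $\Ac$ is a linear space (closed under addition of arbitrary predictable self-financing perturbations). First I would fix an admissible $\xi^\star$ satisfying \eqref{eq:op_cond} and an arbitrary competitor $\xi \in \Ac$; writing $\eta = \xi - \xi^\star$, I want to show $\Jc(\xi;c) \geq \Jc(\xi^\star;c)$. Expanding $H_n - c - G_n(\xi) = (H_n - c - G_n(\xi^\star)) - G_n(\eta)$ inside each square and summing over $n$, one gets
\[
\Jc(\xi;c) = \Jc(\xi^\star;c) - 2\sum_{n=0}^N \Eb\!\left[\omega_n\big(H_n - c - G_n(\xi^\star)\big)G_n(\eta)\right] + \sum_{n=0}^N \Eb\!\left[\omega_n G_n(\eta)^2\right].
\]
The last sum is nonnegative since $\omega_n \geq 0$, so it suffices to show the cross term vanishes for every $\eta$ arising as a difference of admissible strategies, i.e. every predictable $\eta$ with $G(\eta)$ well-defined.

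The key step is then to show $\sum_{n=0}^N \Eb[\omega_n(H_n - c - G_n(\xi^\star))G_n(\eta)] = 0$. Here I would substitute $G_n(\eta) = \sum_{i=1}^n \eta_i \Delta S_i$, interchange the order of summation so the outer sum is over $i$, and collect the coefficient of each $\eta_i \Delta S_i$: this coefficient is $\sum_{n=i}^N \omega_n(H_n - c - G_n(\xi^\star))$. This gives
\[
\sum_{i=1}^N \Eb\!\left[\eta_i\,\Delta S_i \left(\sum_{n=i}^N \omega_n\big(H_n - c - G_n(\xi^\star)\big)\right)\right].
\]
Since $\eta_i$ is $\Fc_{i-1}$-measurable, I would condition on $\Fc_{i-1}$ and pull $\eta_i$ out, so each term equals $\Eb[\eta_i \cdot \Eb[\Delta S_i(\sum_{n=i}^N \omega_n(H_n - c - G_n(\xi^\star)))\mid \Fc_{i-1}]]$, which is zero by hypothesis \eqref{eq:op_cond}.

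The main obstacle is integrability/justification of the manipulations: without the ND condition and without square-integrability of strategies, the quantities $G_n(\xi^\star)$ and $G_n(\eta)$ need not lie in $\LP$, so the cross term and the expectations above are not obviously finite, and Fubini-type interchanges of sum and conditional expectation require care. I expect this is handled by first restricting attention to perturbations $\eta$ for which all the relevant terms are integrable (which is enough because one ultimately compares against $\xi^\star$ plus such perturbations, and the admissible competitors of interest — in particular those with finite cost — fall in this class), or by a localization argument along a sequence of stopping times adapted to the local square-integrability of $S$. I would also note that the proposition only claims sufficiency, so I do not need to invert the argument; it is enough that the cross term vanishes and the quadratic remainder is nonnegative. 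The conditioning argument itself (using predictability of $\eta_i$ and the tower property termwise) is routine once integrability is secured.
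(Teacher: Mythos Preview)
Your proposal is correct and follows essentially the same approach as the paper: expand the quadratic $\Jc(\xi^\star+\eta;c)$, use linearity of $G$ and $\omega_n\ge 0$ to see the remainder $\sum_n \Eb[\omega_n G_n(\eta)^2]$ is nonnegative, and kill the cross term by swapping the order of summation and conditioning on $\Fc_{i-1}$ so that \eqref{eq:op_cond} applies. The paper's proof is equally terse about the integrability issues you flag, so your plan matches it in both structure and level of rigor.
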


\begin{proof}
	Let $\xi \in \Ac$ be an arbitrary admissible strategy, and $\xi^\star \in \Ac$ satisfying condition \eqref{eq:op_cond}. 
	Noting $\xi + \xi^\star \in \Ac$ and the linearity of $G$ by \eqref{eq:G}, we obtain
	\begin{align}
		\Jc(\xi^\star + \xi; c) 
		&= \Jc(\xi^\star; c)  +  \sum_{n=0}^{N} \Eb \left[ \omega_n G_n( \xi)^2 \right]  \\
	 &\quad - 2 \sum_{i=1}^N \Eb \left[ \xi_i  \Eb    \left[ \left(\sum_{n=i}^{N} \omega_n (H_n - c - G_n(\xi^\star))\right)\Delta S_i \Big\vert {\cal F}_{i-1}\right] \right], 
	\end{align}
	which leads to the desired result.
\end{proof}


To facilitate the presentation of the main results, we define the following predictable processes  by 
\begin{align}
	\beta_n &:= \frac{\alpha_n}{\delta_n} \qquad \text{ and } \qquad   \rho_n :=   \frac{\eta_n}{\delta_n},  \quad \text{ for  all } n \in \Tc^+,
	\label{eq:beta}
\end{align}
where $\alpha = (\alpha_n)_{n \in \Tc^+}$, $\eta = (\eta_n)_{n \in \Tc^+}$, and $\delta = (\delta_n)_{n \in \Tc^+}$ are given by  
\begin{align}
	\alpha_n &:= \Eb\bigg[ \Delta S_n \bigg[\sum_{i=n}^{N} \omega_i \prod_{j=n+1}^{i} (1 - \beta_j \Delta S_j)\bigg] \bigg\vert {\cal F}_{n-1}\bigg],  
	\label{eq:alpha}
	\\
	\eta_n &:= \Eb\bigg[ \Delta S_n \bigg[\sum_{i=n}^{N} H_i \omega_i \prod_{j=n+1}^{i} (1 - \beta_j \Delta S_j)\bigg] \bigg\vert {\cal F}_{n-1}\bigg], 
	\label{eq:eta}
	\\
	\delta_n &:= \Eb \bigg[ \Delta S_n^2 \bigg[\sum_{i=n}^{N} \omega_i \prod_{j=n+1}^{i} (1 - \beta_j \Delta S_j)^2 \bigg] \bigg \vert {\cal F}_{n-1}\bigg].
	\label{eq:delta}
\end{align}
As convention,  we set a sum over an  empty  set to zero, a  product  over an  empty  set to one, and $0/0$ to zero.
The above definitions are made via backward induction. 
Namely, we first define, at time $n = N$, that 
$\alpha_N = \Eb[\omega_N \Delta S_N | \Fc_{N-1}]$, $\eta_N = \Eb[\omega_N H_N \Delta S_N | \Fc_{N-1}]$,
$\delta_N = \Eb[\omega_N \Delta S_N^2 | \Fc_{N-1}]$, $\beta_N = \frac{\alpha_N}{\delta_N}$, $\rho_N = \frac{\eta_N}{\delta_N}$,
and use induction to complete the definitions for all $n=N-1, N-2, \cdots, 1$.
We show that the processes in \eqref{eq:beta}-\eqref{eq:delta} are  well defined later in Corollary \ref{corowelldefined}.
By the definitions in \eqref{eq:beta}-\eqref{eq:delta}, these processes are predictable defined for $n \in \Tc^+$, and we further set them to 0 when $n=0$, i.e.,
\begin{align}
 \beta_0 = \rho_0 = \alpha_0 = \eta_0 = \delta_0 = 0. 
\end{align}

\begin{theorem}
	\label{maintheorem1} 
	For any fixed initial capital $c $, we define $\xi^\star(c) : = \left(\xi^\star_n(c)\right)_{n \in \Tc^+}$ by
	\begin{align}
		\label{eq:op_xi}
		\xi_n^\star(c) := \rho_n - \beta_n  \big( c + G_{n-1}(\xi^\star (c) ) \big),  
	\end{align}
	where $\rho=(\rho_n)_{n \in \Tc^+}$ and $\beta = (\beta_n)_{n \in \Tc^+}$ are given by \eqref{eq:beta} and $G$ is given by \eqref{eq:G}.
	The following results hold true:
	\begin{itemize}
		\item[\rm (a)] $V(c) = \Jc(\xi^\star; c) < \infty$ for all $c \in \Rb$, and the strategy $\xi^\star(c)$ defined in \eqref{eq:op_xi} solves Problem \eqref{eq:main_1}.
		
		\item[\rm (b)] The value function of Problem \eqref{eq:main_1} is given by 
		\begin{align}
			V(c) = & c^2  \, \sum_{n=0}^{N}  \Eb \left( Z_n \right) - 2c \sum_{n=0}^{N} \Eb \left(    Z_n H_n  \right) \\
			&+
			\sum_{n=0}^{N} \Eb \left( \omega_n \left(  H_n  - \sum_{i=1}^{n} \rho_i\Delta S_i \prod_{j=i+1}^{n} \left(1 - \beta_j \Delta S_j \right)    \right)^2 \right), \label{eq:value}
		\end{align}
		where $Z = (Z_n)_{n \in \Tc}$ is defined by 
		\begin{align}
			\label{eq:Z}
			Z_n := \omega_n \, \prod_{i=1}^n \, (1 - \beta_i \Delta S_i). 
		\end{align}
		
		\item[\rm (c)] Let $n$ be an arbitrary but fixed non-negative integer in $\Tc$, i.e., $n = 0,1,\cdots,N$.
		We have, for all $k = 0, 1, \cdots, n+1$, that  (setting $G_{-1} = 0$)
		\begin{align}
			H_n - c  - G_n(\xi^\star(c)) = H_n - \sum_{i=k}^n \rho_i \Delta S_i \prod_{j=i+1}^{n} (1 - \beta_j \Delta S_j) 
			- \big(c + G_{k-1}(\xi^\star(c)) \big) \prod_{i=k}^n (1 - \beta_i \Delta S_i).\qquad 
					\label{eq:H_thm}
		\end{align}
	\end{itemize}
\end{theorem}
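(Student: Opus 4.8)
The plan is to establish the three items in the order (c), (a), (b); (c) is a purely algebraic identity and is the engine for the other two. Throughout I take for granted that the processes $\alpha,\eta,\delta,\beta,\rho$ are well defined (finiteness of $\delta_n$, the conventions $0/0=0$, etc.), as asserted in Corollary~\ref{corowelldefined}. To prove (c), I would argue by \emph{downward} induction on $k$, from $k=n+1$ down to $k=0$, using only the defining recursion~\eqref{eq:op_xi} and the self-financing relation $G_m(\xi^\star)=G_{m-1}(\xi^\star)+\xi^\star_m\Delta S_m$. The base case $k=n+1$ is immediate, both sides being $H_n-c-G_n(\xi^\star)$. For the step from $k+1$ to $k$, rewrite~\eqref{eq:op_xi} as
\[
c+G_k(\xi^\star)=\rho_k\,\Delta S_k+\big(c+G_{k-1}(\xi^\star)\big)(1-\beta_k\Delta S_k),
\]
insert this into the $(k+1)$-level identity, and regroup the products $\prod(1-\beta_j\Delta S_j)$; this is routine. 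I would record two consequences: taking $k=n$ shows $\xi^\star_n\in\mathcal{F}_{n-1}$, so $\xi^\star\in\Ac$ (self-financing being built in); and taking $k=0$ gives the compact form $H_n-c-G_n(\xi^\star)=A_n-c\,B_n$, where $B_n:=\prod_{i=1}^n(1-\beta_i\Delta S_i)$ and $A_n:=H_n-\sum_{i=1}^n\rho_i\Delta S_i\prod_{j=i+1}^n(1-\beta_j\Delta S_j)$, so $Z_n=\omega_nB_n$.

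For (a), the plan is to verify the sufficient optimality condition~\eqref{eq:op_cond} of Proposition~\ref{prop:op} and then apply it. Fix $i$, substitute the $k=i$ case of~\eqref{eq:H_thm} into $\sum_{n=i}^{N}\omega_n(H_n-c-G_n(\xi^\star))$, multiply by $\Delta S_i$, and condition on $\mathcal{F}_{i-1}$. The part carrying the factor $c+G_{i-1}(\xi^\star)$ collapses to $(c+G_{i-1}(\xi^\star))(\alpha_i-\beta_i\delta_i)=0$ since $\beta_i=\alpha_i/\delta_i$; the two remaining parts are reorganized by interchanging the order of summation and then shown to vanish term by term using the telescoping identity $1-\prod_{j=i+1}^{n}(1-\beta_j\Delta S_j)=\sum_{l=i+1}^{n}\beta_l\Delta S_l\prod_{j=l+1}^{n}(1-\beta_j\Delta S_j)$, the tower property, and the relation $\rho_l\alpha_l=\beta_l\eta_l$ (both consequences of the definitions~\eqref{eq:beta}). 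One input here deserves to be isolated as a lemma: the conditional expectations in~\eqref{eq:alpha}--\eqref{eq:delta} are unchanged if the squares on the factors $(1-\beta_j\Delta S_j)$ are removed; equivalently $\Eb[\Delta S_i^2\sum_{n=i}^{N}\omega_n\prod_{j=i+1}^{n}(1-\beta_j\Delta S_j)\mid\mathcal{F}_{i-1}]=\delta_i$. This follows from a short downward induction on the relevant tail sums whose only ingredient is $\alpha_m-\beta_m\delta_m=0$, and it is precisely what makes the cancellations close. Once~\eqref{eq:op_cond} holds, Proposition~\ref{prop:op} gives $\Jc(\xi^\star+\xi;c)\ge\Jc(\xi^\star;c)$ for every $\xi\in\Ac$, so $\xi^\star$ is optimal and $V(c)=\Jc(\xi^\star;c)$.

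For (b), substitute $H_n-c-G_n(\xi^\star)=A_n-c\,B_n$ into $\Jc(\xi^\star;c)=\sum_{n=0}^{N}\Eb[\omega_n(A_n-c\,B_n)^2]$ and expand the square. The $c$-free term is already the last line of~\eqref{eq:value}. For the other two it suffices to show $\sum_n\Eb[\omega_nB_n^2]=\sum_n\Eb[Z_n]$ and $\sum_n\Eb[\omega_nA_nB_n]=\sum_n\Eb[Z_nH_n]$, and both reduce, after interchanging the order of summation, peeling one factor $(1-\beta_l\Delta S_l)$ off $B_l$, and applying the equivalence-of-squares lemma, to the vanishing of $\Eb[\Delta S_l(1-\beta_l\Delta S_l)(\cdots)\mid\mathcal{F}_{l-1}]=\alpha_l-\beta_l\delta_l$. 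This is again bookkeeping.

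The step I expect to be the real obstacle is the finiteness claim in (a), $V(c)=\Jc(\xi^\star;c)<\infty$: this is exactly where dropping the ND condition and the square-integrability of $\xi^\star\Delta S$ bites, so one cannot simply bound gains in $L^2$. I would prove it by backward induction on $i$, establishing $\Eb[\sum_{n=i}^{N}\omega_n(H_n-c-G_n(\xi^\star))^2]<\infty$ and using, at each step, that $H_n\in\LP$, that $\omega_n\in[0,1]$, that $\xi^\star_n$ and $c+G_{n-1}(\xi^\star)$ are $\mathcal{F}_{n-1}$-measurable, and that $\Eb[\Delta S_n^2\mid\mathcal{F}_{n-1}]<\infty$ to absorb the one-step increment --- together with enough control to pass from conditional to unconditional expectations. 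A related, lower-order point, which I would dispose of first, is that the expansion in the proof of Proposition~\ref{prop:op} and the interchanges of summation and conditional expectation above are legitimate without a priori integrability of $G(\xi)$: if $\Jc(\xi^\star+\xi;c)=\infty$ there is nothing to prove, and if it is finite then integrability of the cross terms follows from that of $\Jc(\xi^\star;c)$ and $\Jc(\xi^\star+\xi;c)$.
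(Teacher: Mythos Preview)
Your plan for (c), the optimality half of (a), and (b) is essentially the paper's proof: the same downward induction on $k$ for (c), the same verification of the first-order condition via Proposition~\ref{prop:op} using the telescoping identity and $\rho_l\alpha_l=\beta_l\eta_l$, and the same expansion of the square after inserting the $k=0$ (equivalently $k=1$) representation. Your ``equivalence-of-squares'' lemma is exactly the paper's $\Eb[A_n\mid\Fc_n]=\Eb[D_n\mid\Fc_n]$ from Lemma~\ref{propintegrable1}, and the cross-term you have to kill is the paper's $\mathbb{CT}$.

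The one genuine gap is the finiteness argument. You propose backward induction on $i$ for $\Eb\big[\sum_{n=i}^N\omega_n(H_n-c-G_n(\xi^\star))^2\big]<\infty$, but already at the base case $i=N$ this quantity involves $c+G_{N-1}(\xi^\star)$, for which you have no a priori integrability; the inputs you list ($H_n\in\LP$, $\omega_n\in[0,1]$, $\Eb[\Delta S_n^2\mid\Fc_{n-1}]<\infty$, measurability) are not enough to close the step, and you yourself flag the passage from conditional to unconditional finiteness as unclear. The paper does \emph{not} use induction here. It takes the $k=0$ representation, applies Cauchy--Schwarz to split into three pieces, bounds the $c^2$-piece by $\Eb[D_0]\le N+1$ via \eqref{AKDKleq}, and for the $\rho$-piece uses the key identity
\[
\Eb\Big[\rho_i^2\,\Delta S_i^2\sum_{j\ge i}\omega_j\prod_{l=i+1}^j(1-\beta_l\Delta S_l)^2\Big]=\Eb\Big[\tfrac{\eta_i^2}{\delta_i}\Big],
\]
followed by Cauchy--Schwarz on $\eta_i$ (recall \eqref{eq:eta}) to get $\eta_i^2\le\delta_i\cdot\Eb\big[\sum_{j\ge i}H_j^2\mid\Fc_{i-1}\big]$, whence $\Eb[\eta_i^2/\delta_i]\le\sum_{j\ge i}\Eb[H_j^2]<\infty$. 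This is the substantive idea you are missing; replace your induction sketch with this direct bound and the rest of your plan goes through.
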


\begin{corollary}
	If  the random weights process $\omega = (\omega_n)_{n \in \Tc}$ degenerates into a sequence of constants and $S$ is an $\mathbb{F}$-martingale, we have 
	\begin{align}
		\xi^\star_n = \rho_n = \frac{\sum_{i=n}^{N} \; \omega_i \, \Eb\left[ H_i \Delta S_n   \Big\vert {\cal F}_{n-1}\right]}{\sum_{i=n}^{N}  \; \omega_i  \, \Eb \left[ \Delta S_n^2     \Big\vert {\cal F}_{n-1}\right]}, \qquad n \in \Tc^+,
	\end{align}
	which is independent of the initial capital $c$. Note that if the denominator in the above ratio is zero, the numerator must also be zero. In such a scenario, we set  0/0 to zero by convention.
\end{corollary}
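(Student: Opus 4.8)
The plan is to use the martingale property of $S$ to collapse the recursive definitions in \eqref{eq:alpha}, \eqref{eq:eta}, and \eqref{eq:delta}. Concretely, I would first show by backward induction on $n$ that $\beta_n = 0$ for every $n \in \Tc^+$, and then read off the asserted formulas for $\rho_n$ and $\xi_n^\star$ directly from those definitions together with the representation \eqref{eq:op_xi} of the optimal strategy established in Theorem \ref{maintheorem1}. Well-definedness of all the quantities involved is supplied by Corollary \ref{corowelldefined}, which I may freely invoke throughout.

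For the base case $n = N$: since $S$ is an $\Fb$-martingale, $\Eb[\Delta S_N \mid \Fc_{N-1}] = 0$; and since the weight $\omega_N$ is a deterministic constant, it factors out of the conditional expectation, so that $\alpha_N = \omega_N\,\Eb[\Delta S_N \mid \Fc_{N-1}] = 0$, whence $\beta_N = \alpha_N/\delta_N = 0$. For the inductive step, assume $\beta_j = 0$ for every $j \in \{n+1, \dots, N\}$ (for $n=N$ this hypothesis is vacuous, so the step is in fact uniform in $n$). Then each product $\prod_{j=n+1}^{i}(1 - \beta_j \Delta S_j)$ occurring in \eqref{eq:alpha} equals $1$, so that
\[
	\alpha_n = \Eb\bigg[ \Delta S_n \sum_{i=n}^{N} \omega_i \,\bigg\vert\, \Fc_{n-1} \bigg] = \bigg( \sum_{i=n}^{N} \omega_i \bigg)\, \Eb[\Delta S_n \mid \Fc_{n-1}] = 0,
\]
where we again pulled the constants $\omega_i$ out of the conditional expectation and then used the martingale property. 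Hence $\beta_n = \alpha_n/\delta_n = 0$, which completes the induction.

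With $\beta \equiv 0$ in hand, the analogous collapse of the products in \eqref{eq:eta} and \eqref{eq:delta} gives
\[
	\eta_n = \sum_{i=n}^{N} \omega_i\, \Eb[H_i \Delta S_n \mid \Fc_{n-1}] \qquad \text{and} \qquad \delta_n = \sum_{i=n}^{N} \omega_i\, \Eb[\Delta S_n^2 \mid \Fc_{n-1}],
\]
so that $\rho_n = \eta_n/\delta_n$ is precisely the displayed ratio. Substituting $\beta_n = 0$ into the recursion \eqref{eq:op_xi} then yields $\xi_n^\star(c) = \rho_n - \beta_n\big( c + G_{n-1}(\xi^\star(c)) \big) = \rho_n$, which is in particular independent of the initial capital $c$.

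The difficulties here are of a bookkeeping rather than a conceptual nature. The crucial point is that ``a sequence of constants'' must be read as \emph{deterministic} weights: this is exactly what licenses moving the $\omega_i$ with $i \ge n$ outside the $\Fc_{n-1}$-conditional expectation, and the argument would break for merely $\Fb$-predictable weights. One should also keep track of the convention $0/0 := 0$ on the set $\{\delta_n = 0\}$: once $\beta \equiv 0$, one has $\delta_n = \big( \sum_{i=n}^{N}\omega_i \big)\Eb[\Delta S_n^2 \mid \Fc_{n-1}] \ge 0$, so $\delta_n$ vanishes only where $\sum_{i=n}^{N}\omega_i = 0$ or $\Delta S_n = 0$ a.s., and on that set $\alpha_n = \eta_n = 0$ as well (indeed $\alpha_n \equiv 0$ once $\beta \equiv 0$), so $\beta_n = 0$ and $\rho_n = 0$ there without ambiguity. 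No closure, projection, or non-degeneracy argument is needed; everything reduces to the closed-form expressions already derived in Theorem \ref{maintheorem1}.
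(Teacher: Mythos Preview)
Your proof is correct and is exactly the argument the paper intends: the Corollary is stated immediately after Theorem \ref{maintheorem1} without proof, as a direct specialization, and your backward induction showing $\beta \equiv 0$ under the martingale/constant-weight hypotheses is the natural way to unpack it. There is nothing to add.
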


\begin{remark}
	\label{rem:multi}
In the model setup, we consider a financial market with only one risky asset. However, we note that such an assumption does not impose extra restrictions, as all the results in Theorem \ref{maintheorem1} hold under the corresponding vector notations. Please refer to \ref{sec_multiple} for details. 
In particular, such a positive extension result may prove to be valuable if a fund manager wants to trade multiple stocks to track an index, as mentioned in Section \ref{sec_interpretation_prob}.
\end{remark}

\section{Connection with Pricing}
\label{sec:pricing}

In this section, we explore the connection between hedging and pricing of a sequence of contingent claims in the quadratic sense. 
Key results are presented in Theorem \ref{thm:pricing}. 

Note from Theorem \ref{maintheorem1} that the optimal strategy $\xi^\star = \xi^\star(c)$ depends on the initial capital $c$.
If the minimum hedging error $V(c) = 0$ for some $c$ in \eqref{eq:main_1}, then the portfolio $(c, \xi^\star(c))$ replicates the claim $(H, \omega)$, and hence $c$ is the fair price of $(H, \omega)$ at time 0.
However, in a general incomplete market, it is most likely $V(c) > 0$ for all $c$ and one may further consider $\min_{c \in \Rb} \, V(c)$ to find the ``optimal" hedging cost $c^\star$, along with the corresponding optimal strategy $\xi^\star(c^\star)$. 
In other words, an investor, with objective given by \eqref{eq:main_1}, prefers the portfolio $(c^\star, \xi^\star(c^\star))$  over any admissible portfolio $(c, \xi)$.
Based on the above discussions, we formulate the quadratic pricing problem of a sequential claim $(H, \omega)$ as follows:
\begin{align}
	\label{pricingproblem}
	V^\star = \min_{c \in \Rb} \; V(c) = \min_{c \in \Rb}  \min_{\xi \in \Ac}  \sum_{n=0}^{N} \Eb \left[\omega_n (H_n - c - G_n(\xi))^2\right]. \quad
\end{align}

\begin{theorem}
	\label{thm:pricing}
	Problem \eqref{pricingproblem} has an optimal solution $(c^\star, \xi^\star(c^\star))$  given by 
	\begin{align}
		\label{eq:c_op}
		c^\star = \dfrac{\sum_{n=0}^{N}  \Eb \left(Z_n H_n \right)}{\sum_{n=0}^N \Eb(Z_n)} \qquad \text{and} \qquad 
		\xi^\star(c^\star) \text{ by } \eqref{eq:op_xi}, 
	\end{align}
	where $Z=(Z_n)_{n \in \Tc}$ is defined in \eqref{eq:Z}. 
\end{theorem}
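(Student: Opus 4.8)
The plan is to regard $V(c)$ as an explicit real quadratic polynomial in the scalar $c$, minimise it by elementary calculus, and then read off the corresponding optimal strategy from Theorem~\ref{maintheorem1}(a). By Theorem~\ref{maintheorem1}(b) we may write $V(c) = A\,c^{2} - 2B\,c + C$, where $A := \sum_{n=0}^{N}\Eb(Z_{n})$, $B := \sum_{n=0}^{N}\Eb(Z_{n}H_{n})$, and $C$ is the last sum appearing in \eqref{eq:value}. By Theorem~\ref{maintheorem1}(a) the value $V(c)$ is finite for every $c$, so the coefficients $A,B,C$ are finite real numbers (e.g.\ evaluate $V$ at $c=0,\pm1$), and $C = V(0) \ge 0$ since it is a sum of expectations of nonnegative random variables.

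First I would pin down the sign of the leading coefficient: $A \ge 0$, and moreover $A = 0$ forces $B = 0$. This follows from the fact that $V(c) \ge 0$ for all $c \in \Rb$, which is immediate from $V(c) = \min_{\xi \in \Ac}\Jc(\xi;c)$ together with $\Jc(\xi;c) = \sum_{n=0}^{N}\Eb[\omega_{n}(H_{n}-c-G_{n}(\xi))^{2}] \ge 0$ because $\omega_{n} \ge 0$. A quadratic $A c^{2} - 2Bc + C$ that is nonnegative on all of $\Rb$ must have $A \ge 0$; and if $A = 0$, the affine map $c \mapsto -2Bc + C$ is nonnegative on $\Rb$ only when $B = 0$. (Alternatively, $A = \sum_{n}\Eb(Z_{n}) \ge 0$ can be verified directly by backward induction on the conditional expectations of the partial sums $\sum_{i=n}^{N}\omega_{i}\prod_{j=n+1}^{i}(1-\beta_{j}\Delta S_{j})$, using the definitions \eqref{eq:alpha} and \eqref{eq:delta} and the relation $\beta_{n} = \alpha_{n}/\delta_{n}$; but the argument via $V \ge 0$ is shorter.)

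Next I would split into two cases. If $A > 0$, then $V$ is a strictly convex parabola whose unique minimiser is the critical point of $V'(c) = 2Ac - 2B$, namely $c^{\star} = B/A$, which is precisely \eqref{eq:c_op}. If $A = 0$, then $B = 0$ by the previous step, so $V \equiv C$ is constant and every $c$ is a minimiser; in particular the quantity prescribed by \eqref{eq:c_op} is optimal once we read $0/0$ as $0$, as stipulated by the paper's conventions. In either case $\min_{c \in \Rb}V(c)$ is attained at $c^{\star}$.

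Finally, to exhibit a minimising pair for Problem~\eqref{pricingproblem}, I would invoke Theorem~\ref{maintheorem1}(a) with the fixed capital $c = c^{\star}$: the strategy $\xi^{\star}(c^{\star})$ defined in \eqref{eq:op_xi} satisfies $\Jc(\xi^{\star}(c^{\star});c^{\star}) = V(c^{\star}) = \min_{\xi \in \Ac}\Jc(\xi;c^{\star})$. Hence, for every admissible pair $(c,\xi)$ we have $\Jc(\xi;c) \ge V(c) \ge V(c^{\star}) = \Jc(\xi^{\star}(c^{\star});c^{\star})$, so $(c^{\star},\xi^{\star}(c^{\star}))$ is a global minimiser and $V^{\star} = \Jc(\xi^{\star}(c^{\star});c^{\star})$, which is the assertion. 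The only point requiring genuine care is the degenerate case $A = 0$ — reconciled with formula \eqref{eq:c_op} through the $0/0 = 0$ convention together with the implication $A = 0 \Rightarrow B = 0$ — while everything else is bookkeeping on top of Theorem~\ref{maintheorem1}.
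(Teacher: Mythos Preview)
Your proposal is correct and follows essentially the same approach as the paper: write $V(c)$ as the explicit quadratic in $c$ from Theorem~\ref{maintheorem1}(b), minimise it, and invoke Theorem~\ref{maintheorem1}(a) at $c=c^\star$ to obtain the chain $\Jc(\xi^\star(c^\star);c^\star)=V(c^\star)\le V(c)\le\Jc(\xi;c)$. If anything, you are more careful than the paper in the degenerate case: the paper simply cites \eqref{eq:Z_pos} for $A\ge 0$ and sets $c^\star=0$ when $A=0$ without further comment, whereas your observation that $V\ge 0$ forces $A=0\Rightarrow B=0$ (so $V$ is constant) actually justifies why the convention $c^\star=0/0=0$ still yields a minimiser.
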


\begin{proof}
	We have $\sum_{n=0}^N \Eb(Z_n)  \ge 0$ from \eqref{eq:Z_pos}, and when the equality holds, we set $c^\star=0$ (recall $0/0=0$ by convention). The rest is obvious from Assertion (b) in Theorem \ref{maintheorem1}.
	%
	In fact, we have $V^\star = \Jc(\xi^\star(c^\star); c^\star) \le \Jc(\xi^\star(c); c) \le \Jc(\xi; c)$ for all $c \in \Rb$ and $\xi \in \Ac$.  This ends the proof.
\end{proof}

Similar to Problem \eqref{eq:ran_prob}, we can reformulate the quadratic pricing problem in \eqref{pricingproblem} under a random horizon $\tau$, and apply Theorem \ref{thm:pricing} to obtain the solution to such a problem. The results are given below.

\begin{corollary}\label{coroforqudrarandomhorizon}
	Let $\tau$ be a random time and $(H, \omega)$ be a contingent claim, with $\omega_n = \Pb(\tau = n | \Fc_n)$. 
	The minimum capital $c^\star$, given in \eqref{eq:c_op},  solves the following pricing problem under random horizon:
	\begin{align}
		\min_{c \in \Rb} \; \min_{\xi \in \Ac} \;  \Eb  \left[ \big( H_\tau - c - G_\tau(\xi) \big)^2 \right].
	\end{align}
	
	Furthermore, if $\tau$ is independent of  $S$ and $S$ is an $\mathbb{F}$-martingale, the minimum capital $c^\star$ is equal to  
	$c^\star = \sum_{n=1}^{N}  \Eb\left(H_n \right)  \cdot \Pb (\tau =n) .$
\end{corollary}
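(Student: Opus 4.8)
The plan is to reduce the random-horizon pricing problem to Problem \eqref{pricingproblem} and then invoke Theorem \ref{thm:pricing}; the explicit form of $c^\star$ under the independence hypothesis will follow once we show $\beta \equiv 0$.

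First I would show that, for every fixed $c \in \Rb$ and every admissible $\xi \in \Ac$, the random-horizon objective $\Eb[(H_\tau - c - G_\tau(\xi))^2]$ coincides with $\Jc(\xi;c) = \sum_{n=0}^{N} \Eb[\omega_n(H_n - c - G_n(\xi))^2]$ for $\omega_n = \Pb(\tau = n \mid \Fc_n)$. Writing $(H_\tau - c - G_\tau(\xi))^2 = \sum_{n=0}^{N} \In_{\{\tau=n\}}(H_n - c - G_n(\xi))^2$ (using $\tau \in \Tc$ together with $H_\tau \In_{\{\tau=n\}} = H_n \In_{\{\tau=n\}}$ and $G_\tau(\xi)\In_{\{\tau=n\}} = G_n(\xi)\In_{\{\tau=n\}}$), I would condition each summand on $\Fc_n$: since $\xi$ is predictable and $S$, $H$ are $\Fb$-adapted, $(H_n - c - G_n(\xi))^2$ is $\Fc_n$-measurable, so the tower property gives $\Eb[\In_{\{\tau=n\}}(H_n - c - G_n(\xi))^2] = \Eb[\Pb(\tau=n\mid\Fc_n)(H_n - c - G_n(\xi))^2] = \Eb[\omega_n(H_n - c - G_n(\xi))^2]$; all summands are nonnegative, so exchanging sum and expectation is harmless. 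Note $\omega_n \in [0,1]$, so $(H,\omega)$ is a legitimate claim in the sense of Section \ref{sec:prob}. Taking the minimum over $\xi \in \Ac$ yields $\min_{\xi}\Eb[(H_\tau - c - G_\tau(\xi))^2] = V(c)$ for each $c$, and minimizing over $c \in \Rb$ then identifies the random-horizon pricing problem with Problem \eqref{pricingproblem}; Theorem \ref{thm:pricing} shows it is solved by $c^\star$ of \eqref{eq:c_op} (together with $\xi^\star(c^\star)$ from \eqref{eq:op_xi}).

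For the ``furthermore'' part, independence of $\tau$ and $S$ --- read as $\tau$ being independent of $\Fc_N$ --- gives $\omega_n = \Pb(\tau=n\mid\Fc_n) = \Pb(\tau=n) =: p_n$, a deterministic constant. With constant weights and $S$ an $\Fb$-martingale, a backward induction on the recursion \eqref{eq:alpha}--\eqref{eq:delta} shows $\beta_n \equiv 0$: indeed $\alpha_N = \omega_N \Eb[\Delta S_N \mid \Fc_{N-1}] = 0$, and if $\beta_j = 0$ for all $j > n$ then every product in \eqref{eq:alpha} equals $1$, so $\alpha_n = \big(\sum_{i=n}^{N} \omega_i\big)\Eb[\Delta S_n \mid \Fc_{n-1}] = 0$; this is precisely the special case recorded in the corollary following Theorem \ref{maintheorem1}. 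Consequently $Z_n = \omega_n = p_n$ by \eqref{eq:Z}, hence $\Eb(Z_n) = p_n$ and $\Eb(Z_n H_n) = p_n \Eb(H_n)$, and substituting into \eqref{eq:c_op} while using $\sum_{n=0}^{N} p_n = \Pb(\tau \in \Tc) = 1$ and $p_0 = 0$ (since $\tau$ is positive) produces $c^\star = \sum_{n=1}^{N} \Eb(H_n)\,\Pb(\tau=n)$.

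The point that needs care is the measurability/conditioning step in the first reduction --- ensuring $(H_n - c - G_n(\xi))^2 \in \Fc_n$ so that $\Pb(\tau=n\mid\Fc_n)$ genuinely factors out, and tracking that the identity holds in $[0,\infty]$ before finiteness at the optimum is inherited from Theorem \ref{maintheorem1}(a). A secondary, essentially cosmetic, issue is pinning down the precise meaning of ``$\tau$ independent of $S$'' so that $\Pb(\tau=n\mid\Fc_n)$ collapses to $\Pb(\tau=n)$; everything else is routine.
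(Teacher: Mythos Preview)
Your proposal is correct and follows the approach the paper itself indicates: the reduction of the random-horizon problem to Problem \eqref{pricingproblem} via $\omega_n = \Pb(\tau = n \mid \Fc_n)$ is precisely what the paper asserts in Section \ref{sec_interpretation_prob}, and the ``furthermore'' part is exactly the constant-weight martingale simplification recorded in the corollary after Theorem \ref{maintheorem1}. The paper does not spell out a proof of this corollary, so your write-up simply fills in the routine details along the intended lines.
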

\begin{remark}
	Denote $\widetilde{Z} = (\widetilde{Z}_n)_{n \in \Tc}$, where $\widetilde Z_n = Z_n / \sum_{i=0}^N \, \Eb[Z_i]$. 
	Then $c^\star$ given in \eqref{eq:c_op} can be rewritten as 
	$c^\star = \sum_{n=0}^N \, \Eb[\widetilde{Z}_n H_n]$. 
	Furthermore, if $\tau$ degenerates to a constant $N$, Corollary \ref{coroforqudrarandomhorizon} is reduced to   Corollary 3.2 in \cite{schweizer1995variance} and $\widetilde{Z}$ is reduced to a signed  probability measure that is absolutely continuous with respect to $\Pb$. 
	
	As pointed out in \cite{schweizer1995variance} (see Section 5 therein), $\widetilde{Z}$ may become negative with positive probability in certain models. 
	As a result, it is possible that $H_n \ge 0$ for all $n \in \Tc$ but $c^\star <0$. For this reason, Schweizer comments in \cite{schweizer1995variance}(p.29) that ``This (example) shows that an interpretation of either $H_0$ or $V_0$ (corresponding to our $c^\star$) as a fair price of $H$ does not always make sense from an economic point of view''.
	
\end{remark}

\section{An Example}\label{secexam}

In this section, we revisit the  Schachermayer's example in \cite{schweizer1995variance}, 
where the risky asset $S$ does \emph{not} satisfy the ND condition. 
Such an example is used in \cite{schweizer1995variance} to show that a \emph{square integrable} optimal hedging strategy to Problem \eqref{eq:prob} may fail to exist if the ND condition fails. 
However, the main theorem,  Theorem \ref{maintheorem1}, in this paper applies even when the ND condition fails. In addition, we mention that we do not impose square integrability on admissible strategies, which is essential to obtain an optimal strategy in this example (see Remark \ref{rem:exm} below).

Let $\Omega = [0,1]\otimes \{-1,1\}$ with its  Borel $\sigma$-algebra ${\cal F}$.  The element in $\Omega$ is denoted by 
$x=(u,v)$ with  $u\in [0,1]$ and $v\in \{-1,1\}$. Define two random variables $\Uc(x)  = u$ and $\Vc(x) = v$. Let $\mathbb{P}$ be a probability measure on  $(\Omega, {\cal F})$ such that $\Uc$ is uniform distributed on $[0,1]$ and 
\begin{align}
	\mathbb{P}(\Vc = 1 \vert \Uc)= \Uc^2 \quad \text{ and } \quad \mathbb{P}(\Vc = -1 \vert \Uc) = 1-\Uc^2.
\end{align}
The filtration is specified as 
${\cal F}_0 = {\cal F}_1 = \sigma(\Uc)$ and   ${\cal F}_2 = {\cal F}$.
The dynamics of the risky asset $S$ is given by  $S_0 = 0$,  $\Delta S_1 = 1$, and $\Delta S_2 = \Vc^+ \, (1+ \Uc)-1$. Here, $\Vc^+:=\max(\Vc,0)$ stands for the positive part of $\Vc$.  Following \cite{schweizer1995variance}, it is easy to see that
\begin{align}
	\frac{\left(\mathbb{E} [\Delta S_2 \vert {\cal F}_1]\right)^2}{\mathbb{E}[\Delta S_2^2 \vert {\cal F}_1]} = 
	\frac{\left(\Uc^3+\Uc^2-1\right)^2}{\Uc^4 - \Uc^2 +1}
\end{align}
is not uniformly bounded by a constant less than one. Therefore, the  price process $S$ does not satisfy the ND-condition, and the main theorem in \cite{schweizer1995variance} cannot be used to derive the optimal strategy in this market.

As noted previously, our main results in Theorem \ref{maintheorem1} apply even when the ND condition fails. 
To proceed, let us consider Problem \eqref{eq:main_1} with the claim $(H, \omega)$ given by
\begin{align}
	H_0 &=0, & H_1 &= 1, & H_2 &=\left(\frac{1}{\Uc}+1\right) \Vc^+; \\
	\omega_0 &= 0, & \omega_1 &= 1, & \omega_2 &= \Vc^+. 	\label{eq:cc_ex1}
\end{align}

Using \eqref{eq:alpha}-\eqref{eq:delta}, we first compute their values at time $2$: 
$\alpha_2 = \Eb[\omega_2 \Delta S_2 | \Fc_1] = \Uc^3$, $\eta_2 = \Eb[\omega_2 H_2 \Delta S_2 | \Fc_1] = \Uc^3 + \Uc^2$, and $\delta_2 =\Eb[\omega_2 \Delta S_2^2 | \Fc_1] = \Uc^4$,  
which imply
\begin{align}
	\beta_2 =  \frac{\alpha_2}{\delta_2} = \frac{1}{\Uc} \qquad \text{ and } \qquad 
	\rho_2 = \frac{\eta_2}{\delta_2} = \frac{1+\Uc}{\Uc^2}.  \label{eq:pa2_ex1}
\end{align}
At time 1, notice that $\omega_2 ( 1 - \beta_2 \Delta S_2) \equiv 0$. We then obtain 
$\alpha_1 = \Eb \left[\omega_1 \Delta S_1  \vert \Fc_0 \right] = 1$,  $\eta_1 = \Eb[\omega_1 H_1 \Delta S_1\vert \Fc_0  ] =1$, and  $\delta_1 =  \Eb \left[\omega_1 \Delta S_1^2 \vert \Fc_0 \right] = 1 $, leading to 
\begin{align}
	\beta_1 =\frac{\alpha_1}{\delta_1}=1 \qquad \text{and} \qquad  \rho_1 = \frac{\eta_1}{\delta_1} = 1. \label{eq:pa1_ex1}
\end{align}
Using the optimal strategy \eqref{eq:op_xi} in Theorem \ref{maintheorem1}, we have: 
\begin{corollary}\label{coro_ex1}
	Let $(H, \omega)$ be given by \eqref{eq:cc_ex1}.  For any  initial capital $c$, the optimal quadratic hedging strategy $\xi^\star=(\xi_1^\star, \xi_2^\star)$ to Problem \eqref{eq:main_1} is given by $\xi_1^\star = 1 -  c $ and $\xi_2^\star = \frac{1}{\Uc^2}$. 
\end{corollary}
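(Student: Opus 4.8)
The plan is to apply Theorem~\ref{maintheorem1} directly. The claim $(H,\omega)$ in \eqref{eq:cc_ex1} meets the standing hypotheses of Problem~\ref{our_problem}: each $H_n\in\LP$ (for $H_2$ this holds because $\Eb[\Vc^+\mid\Uc]=\Uc^2$ damps the singularity of $1/\Uc$ near $0$), each $\omega_n\in[0,1]$, and $S\in\LPL$ since $\Delta S_1=1$ and $\Delta S_2$ is bounded on $\{\Vc=1\}$ and equals $-1$ on $\{\Vc=-1\}$. Hence assertion~(a) of Theorem~\ref{maintheorem1} already guarantees that the strategy $\xi^\star(c)=(\xi_1^\star(c),\xi_2^\star(c))$ produced by the recursion \eqref{eq:op_xi} is optimal, and it only remains to evaluate \eqref{eq:op_xi} at $n=1$ and $n=2$ using the values of $\beta_n,\rho_n$ obtained in \eqref{eq:pa1_ex1} and \eqref{eq:pa2_ex1}.

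First I would do $n=1$: since $G_0(\xi^\star)=0$ by \eqref{eq:G}, formula \eqref{eq:op_xi} gives $\xi_1^\star(c)=\rho_1-\beta_1 c=1-c$ because $\beta_1=\rho_1=1$. Then for $n=2$ I would compute the intermediate gain $G_1(\xi^\star)=\xi_1^\star(c)\,\Delta S_1=(1-c)\cdot 1=1-c$ and substitute into \eqref{eq:op_xi}, together with $\beta_2=1/\Uc$ and $\rho_2=(1+\Uc)/\Uc^2$:
\[
\xi_2^\star(c)=\rho_2-\beta_2\big(c+G_1(\xi^\star)\big)=\frac{1+\Uc}{\Uc^2}-\frac{1}{\Uc}\,(c+1-c)=\frac{1+\Uc}{\Uc^2}-\frac{1}{\Uc}=\frac{1}{\Uc^2}.
\]
The dependence on $c$ cancels in the second equality, which is exactly why $\xi_2^\star$ ends up being a deterministic function of $\Uc$ alone.

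Since the corollary is essentially a one-line substitution, there is no real obstacle at this stage; the substantive content lives in the display block preceding the corollary that identifies $\alpha_n,\eta_n,\delta_n$ and hence $\beta_n,\rho_n$. The one identity there I would be careful to justify — and the only mildly delicate point — is $\omega_2(1-\beta_2\Delta S_2)\equiv 0$, used to drop the $i=2$ terms from $\alpha_1,\eta_1,\delta_1$: on $\{\Vc=1\}$ we have $\Delta S_2=\Uc$, so $1-\beta_2\Delta S_2=1-\Uc^{-1}\Uc=0$, while on $\{\Vc=-1\}$ we have $\omega_2=\Vc^+=0$; hence this product telescopes away. I would also record that $\Uc>0$ $\Pb$-a.s., so the divisions by $\Uc$ and $\Uc^2$ are harmless and the $0/0$ convention is not actually invoked; that $\xi^\star$ is $\sigma(\Uc)$-measurable, hence predictable and (being self-financing) admissible in the sense of Section~\ref{sec:prob}, so it is a genuine solution; and, consistently with the discussion opening this section, that $\xi_2^\star\Delta S_2=\Uc^{-2}\Delta S_2$ is only locally square integrable (indeed $\Eb[(\xi_2^\star\Delta S_2)^2]=\infty$), which is precisely why the square-integrability-based theory of \cite{schweizer1995variance} does not recover this optimal strategy.
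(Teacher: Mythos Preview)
Your proposal is correct and follows essentially the same approach as the paper: the paper also derives the corollary by plugging the precomputed values of $\beta_n$ and $\rho_n$ from \eqref{eq:pa2_ex1}--\eqref{eq:pa1_ex1} into the recursion \eqref{eq:op_xi} of Theorem~\ref{maintheorem1}. Your additional checks (that $H_2\in\LP$, that $\omega_2(1-\beta_2\Delta S_2)\equiv 0$, and that $\xi_2^\star\Delta S_2\notin\LP$) are sound and complement the paper's discussion.
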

Below, we consider a new claim $(\overline H,  \overline \omega)$, where $ \overline H = H$ in \eqref{eq:cc_ex1} but $ \overline \omega \neq \omega$ in \eqref{eq:cc_ex1}, given by 
\begin{align}
	\overline{H}_0 &=0, & \overline{H}_1 &= 1, & \overline{H}_2 &=\left(\frac{1}{\Uc}+1\right) \Vc^+;
	\\  
	\overline{\omega}_0 &= 0, & \overline{\omega}_1 &= 1, & \overline{\omega}_2 &= \Vc^-. 
		\label{eq:cc_ex1_extention}
\end{align}
It is straightforward to calculate that 
$\overline{\alpha}_2 = \Uc^2 - 1$,  $\overline{\eta}_2 = 0$, $\overline{\delta}_2 = 1- \Uc^2$,  
and  $\overline{\alpha}_1 = \overline{\eta}_1=\overline{\delta}_1 =1$,
which leads to $\overline{\beta}_2 = -1, \overline{\rho}_2=0, \overline{\beta}_1 =1, \overline{\rho}_1 =1$. Therefore,     the optimal strategy is  
$\overline{\xi}^\star_1 = 1-c$ and $\overline{\xi}^\star_2 = c$. 

\begin{remark}
\label{rem:exm}
	$\rm (1)$ The optimal hedging strategy $\xi^\star$ in Corollary \ref{coro_ex1} is  predictable and self-financing, and is then admissible in our definition. However, it is easy to verify that 
	\begin{align}
		\xi^\star_1  \Delta S_1 \in \LP,\;  \xi^\star_2  \Delta S_2 \in \LPL, \text{ and }   \xi^\star_2  \Delta S_2 \notin \LP,
	\end{align}
	which implies $\xi^\star$ in Corollary \ref{coro_ex1} is not admissible in the definition of \cite{schweizer1995variance}. In other words, we remove the ND condition in \cite{schweizer1995variance} by considering a larger class of admissible strategies than that in \cite{schweizer1995variance}.
	\\
	$\rm(2)$ For the claim $(\overline{H}, \overline{\omega})$ in \eqref{eq:cc_ex1_extention}, obviously,  $\overline{\xi}^\star_1  \Delta S_1 \in \Lc^2(\Pb)$   and $ \overline{\xi}^\star_2  \Delta S_2 \in \Lc^2(\Pb)$. 
	This result shows that, even for the same market model, whether the optimal strategy is square integrable depends on the sequential claims with random weights $(H, \omega)$.
	\\
	$\rm (3)$ The example in Corollary \ref{coro_ex1} sheds light on  Remark \ref{remark_problem2} that  we do not impose the ND condition on the risky price process $S$ and the square integrability condition on hedging strategies.
\end{remark}

\section{Proof to Theorem \ref{maintheorem1}}
\label{secProofs}
In this section, we provide the proof to Theorem \ref{maintheorem1}. 
To that end, we first present several preliminary results. 
We define processes $A=(A_n)_{n \in \Tc}$, $B=(B_n)_{n \in \Tc}$, $C=(C_n)_{n \in \Tc}$, and $D=(D_n)_{n \in \Tc}$ by 
\begin{align}
	A_n &:= \sum_{i=n}^{N} \omega_i \prod_{j=n+1}^{i} (1 - \beta_j \Delta S_j), \quad
	B_n:= A_n \Delta S_n, \\   
	C_n &:= \beta_n B_n, \quad
	D_n := \sum_{i=n}^{N} \omega_i \prod_{j=n+1}^{i} (1 - \beta_j \Delta S_j)^2, 	\label{eq:ABCD}
\end{align}
where $\beta$  is defined in \eqref{eq:beta}. 
By \eqref{eq:ABCD} and the definition of $\alpha=(\alpha_n)_{n \in \Tc^+}$ in \eqref{eq:eta}, we easily deduce, for $n=0,1,\cdots,N-1$, that 
\begin{align}
	\label{eq:A_eq}
\Eb[A_n | \Fc_n] = \Eb[A_{n+1} | \Fc_n] + \omega_n  - \alpha_{n+1}\beta_{n+1},
\end{align}
and $	A_N = \omega_N $.

\begin{lemma}
	\label{propintegrable1}
	Let processes $A$, $B$, $C$, and $D$ be defined by \eqref{eq:ABCD}. We have: $\rm (1)$  $A$ and $C$  are  square integrable ($A, C \in \LP$) and $B$ is \textit{locally} square integrable ($B \in \LPL$), and $\rm (2) $
	\begin{align}\label{AKDKleq}
		\Eb [A_n | \Fc_n ] = \Eb[D_n | \Fc_n]\leq  N-n+1, \qquad \forall \, n \in \Tc.
	\end{align}
\end{lemma}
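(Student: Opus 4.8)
The plan is to prove both assertions at once by backward induction on $n$ (from $n=N$ down to $n=0$), while simultaneously checking that the recursively constructed objects $\alpha_n,\eta_n,\delta_n,\beta_n,\rho_n$ are well defined, so that $A,B,C,D$ in \eqref{eq:ABCD} make sense. The engine is the pair of one-step recursions
$$ A_n \;=\; \omega_n + (1-\beta_{n+1}\Delta S_{n+1})\,A_{n+1}, \qquad D_n \;=\; \omega_n + (1-\beta_{n+1}\Delta S_{n+1})^2\,D_{n+1}, $$
obtained by peeling off the $i=n$ summand in \eqref{eq:ABCD} and factoring $(1-\beta_{n+1}\Delta S_{n+1})$ out of the rest; the terminal values are $A_N=D_N=\omega_N$. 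I would also record, once and for all, the Cauchy--Schwarz bounds (valid precisely because $\omega_i\in[0,1]$, so $\omega_i\ge0$ and $\sum_{i=n}^N\omega_i\le N-n+1$)
$$ A_n^2 \;\le\; (N-n+1)\,D_n, \qquad (\Delta S_n A_n)^2 \;\le\; (N-n+1)\,\Delta S_n^2\,D_n, $$
together with the identity $\beta_{n+1}^2\delta_{n+1}=\beta_{n+1}\alpha_{n+1}$, which is just $\beta_{n+1}\delta_{n+1}=\alpha_{n+1}$ multiplied by $\beta_{n+1}$ (the $0/0\mapsto0$ convention taking care of the degenerate case $\delta_{n+1}=0$, where one checks $\alpha_{n+1}=0$ too). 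The base case $n=N$ is immediate: $\Eb[A_N\mid\Fc_N]=\omega_N=\Eb[D_N\mid\Fc_N]\le1$, and $\delta_N=\Eb[\omega_N\Delta S_N^2\mid\Fc_{N-1}]\le\Eb[\Delta S_N^2\mid\Fc_{N-1}]<\infty$, with $\alpha_N,\eta_N$ finite by Cauchy--Schwarz and $H_N\in\LP$, so $\beta_N,\rho_N$ are well defined.

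The core of the inductive step is assertion (2). Put $\tilde A_n:=\Eb[A_n\mid\Fc_n]$ and $\tilde D_n:=\Eb[D_n\mid\Fc_n]$. From the recursion for $A$ and the identity $\Eb[\Delta S_{n+1}A_{n+1}\mid\Fc_n]=\alpha_{n+1}$ (this is \eqref{eq:alpha} with the index shifted) one gets $\tilde A_n=\omega_n+\Eb[A_{n+1}\mid\Fc_n]-\beta_{n+1}\alpha_{n+1}$, i.e.\ \eqref{eq:A_eq}. For $D$, I would expand $(1-\beta_{n+1}\Delta S_{n+1})^2$ inside $\Eb[\,\cdot\mid\Fc_n]$ into its three terms---this split being legitimate once the cross term is seen to be conditionally integrable, which follows from the inductive bound $\tilde D_{n+1}\le N-n$, local square integrability of $S$, and finiteness of $\delta_{n+1}$. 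The decisive observation is that $(1-\beta_{n+1}\Delta S_{n+1})^2$ and $\Delta S_{n+1}$ are $\Fc_{n+1}$-measurable, so by the tower property and the inductive hypothesis $\tilde D_{n+1}=\tilde A_{n+1}$ one may replace $D_{n+1}$ by $A_{n+1}$ in the first two terms: $\Eb[D_{n+1}\mid\Fc_n]=\Eb[A_{n+1}\mid\Fc_n]$ and $\Eb[\Delta S_{n+1}D_{n+1}\mid\Fc_n]=\Eb[\Delta S_{n+1}A_{n+1}\mid\Fc_n]=\alpha_{n+1}$, while the quadratic term equals $\beta_{n+1}^2\delta_{n+1}$ by \eqref{eq:delta}. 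Collecting terms and using $\beta_{n+1}^2\delta_{n+1}=\beta_{n+1}\alpha_{n+1}$ gives $\tilde D_n=\omega_n+\Eb[A_{n+1}\mid\Fc_n]-\beta_{n+1}\alpha_{n+1}=\tilde A_n$. Since $\omega_n\le1$, $\Eb[A_{n+1}\mid\Fc_n]=\Eb[\tilde A_{n+1}\mid\Fc_n]\le N-n$ (inductive bound), and $\beta_{n+1}\alpha_{n+1}=\beta_{n+1}^2\delta_{n+1}\ge0$, one obtains $\tilde A_n=\tilde D_n\le N-n+1$. This fresh bound then licenses the definitions at level $n$: $\delta_n=\Eb[\Delta S_n^2\,\Eb[D_n\mid\Fc_n]\mid\Fc_{n-1}]\le(N-n+1)\,\Eb[\Delta S_n^2\mid\Fc_{n-1}]<\infty$, and $\alpha_n,\eta_n$ are finite likewise, so $\beta_n,\rho_n$ are well defined and the induction advances.

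For assertion (1): $A_n^2\le(N-n+1)D_n$ and assertion (2) give $\Eb[A_n^2]\le(N-n+1)\Eb[\Eb[D_n\mid\Fc_n]]\le(N-n+1)^2<\infty$, so $A\in\LP$. For $B_n=A_n\Delta S_n$, conditioning on $\Fc_n$ and using $\Eb[A_n^2\mid\Fc_n]\le(N-n+1)\Eb[D_n\mid\Fc_n]\le(N-n+1)^2$ yields $\Eb[B_n^2\mid\Fc_{n-1}]=\Eb[\Delta S_n^2\,\Eb[A_n^2\mid\Fc_n]\mid\Fc_{n-1}]\le(N-n+1)^2\Eb[\Delta S_n^2\mid\Fc_{n-1}]<\infty$, so $B\in\LPL$. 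Finally, rearranging the recursion gives $C_n=\beta_n\Delta S_n A_n=\omega_{n-1}+A_n-A_{n-1}$, and since $A_{n-1},A_n\in\LP$ and $\omega_{n-1}$ is bounded, $C\in\LP$.

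I expect the main difficulty to be organizational: keeping the induction genuinely non-circular. At level $n$ one must already hold $\beta_{n+1},\dots,\beta_N$ (so that $A_n,D_n$ are fully determined), prove assertion (2) for index $n$ using only the hypothesis at level $n+1$, and only afterwards invoke the just-proved bound $\Eb[D_n\mid\Fc_n]\le N-n+1$ to justify forming $\alpha_n,\delta_n,\beta_n$. The accompanying measure-theoretic care---conditional integrability before splitting $\Eb[(1-\beta_{n+1}\Delta S_{n+1})^2D_{n+1}\mid\Fc_n]$, and the $0/0$ conventions when $\delta_n=0$---is routine but needs to be spelled out.
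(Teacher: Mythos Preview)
Your proof is correct and follows essentially the same backward-induction route as the paper, including the Cauchy--Schwarz bound $A_n^2\le(N-n+1)D_n$ for assertion~(1) and the tower-property computation of $\Eb[D_n\mid\Fc_n]$ for assertion~(2). The one noteworthy difference is your argument for $C\in\LP$: the paper bounds $\Eb[C_n^2]$ directly via $\Eb[\beta_n^2\Delta S_n^2\,\Eb[D_n\mid\Fc_n]]=\beta_n^2\delta_n=\alpha_n^2/\delta_n\le N-n+1$ (another Cauchy--Schwarz), whereas you read off $C_n=\omega_{n-1}+A_n-A_{n-1}$ from the one-step recursion and conclude immediately from $A\in\LP$, which is a bit cleaner.
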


\begin{proof}
Assertion (2)  can be shown by backward induction, and using \eqref{eq:ABCD} and \eqref{eq:A_eq}.
Assertion (1) is proved by using the Cauchy-Schwarz inequality, $\omega_n \in [0,1]$, and \eqref{AKDKleq}. 
Please see Section \ref{proof:ABCD} in the online appendix for the complete proof. 
\end{proof}

The following is an immediate application of Lemma \ref{propintegrable1}.

\begin{corollary}
	\label{corowelldefined}
	The processes $\beta$, $\rho$, $\alpha$, $\eta$, and $\delta$ given in \eqref{eq:beta}-\eqref{eq:delta} are well defined.
\end{corollary}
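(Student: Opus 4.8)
The plan is to derive everything from Lemma \ref{propintegrable1}, adding only a short conditional-integrability check and a case split on the event $\{\delta_n=0\}$. Since \eqref{eq:beta}--\eqref{eq:delta} is a finite backward recursion, I would argue by downward induction on $n\in\{N,N-1,\dots,1\}$ (the recursion running in parallel with the one behind Lemma \ref{propintegrable1}). At stage $n$ the random variables $\beta_j,\rho_j$ for $j>n$ are already known to be well defined and $\Fc_{j-1}$-measurable, so $A_n$, $B_n=A_n\Delta S_n$, $D_n$, and the $H$-weighted sum $\widetilde A_n:=\sum_{i=n}^N H_i\omega_i\prod_{j=n+1}^i(1-\beta_j\Delta S_j)$ entering $\eta_n$ are well-defined, finite-valued random variables. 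It then remains to show (i) that $\alpha_n,\eta_n,\delta_n$ are genuine conditional expectations (a.s.\ finite), and (ii) that $\beta_n=\alpha_n/\delta_n$ and $\rho_n=\eta_n/\delta_n$ are unambiguous under the convention $0/0:=0$.

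For (i): Lemma \ref{propintegrable1} gives $B_n\in\LPL$, so $\alpha_n=\Eb[B_n\mid\Fc_{n-1}]$ exists and is a.s.\ finite. Because the summands of $D_n$ are nonnegative we have $D_n\ge 0$ and, by Assertion (2) of Lemma \ref{propintegrable1}, $\Eb[D_n\mid\Fc_n]\le N-n+1$; pulling out the $\Fc_n$-measurable factor $\Delta S_n^2$ gives $0\le\delta_n=\Eb[\Delta S_n^2\,\Eb[D_n\mid\Fc_n]\mid\Fc_{n-1}]\le(N-n+1)\,\Eb[\Delta S_n^2\mid\Fc_{n-1}]<\infty$ a.s., using $S\in\LPL$. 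For $\eta_n$ I would estimate each of the finitely many summands by conditional Cauchy--Schwarz, $\Eb[|H_i\omega_i\Delta S_n\prod_{j=n+1}^i(1-\beta_j\Delta S_j)|\mid\Fc_{n-1}]\le\Eb[H_i^2\mid\Fc_{n-1}]^{1/2}\,\Eb[\Delta S_n^2\,\omega_i\prod_{j=n+1}^i(1-\beta_j\Delta S_j)^2\mid\Fc_{n-1}]^{1/2}$, and then bound the second factor, via $\omega_i\prod_{j=n+1}^i(1-\beta_j\Delta S_j)^2\le D_n$, by $\delta_n^{1/2}<\infty$, while the first factor is a.s.\ finite since $H_i\in\LP$.

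For (ii): the event $\{\delta_n=0\}$ lies in $\Fc_{n-1}$, and on it $\Delta S_n^2 D_n=0$ a.s.; as the terms of $D_n$ are nonnegative this forces $\Delta S_n\,\omega_i\prod_{j=n+1}^i(1-\beta_j\Delta S_j)=0$ a.s.\ for each $i$, hence $\Delta S_n A_n=0$ and $\Delta S_n\widetilde A_n=0$ a.s.\ there, so that $\alpha_n=\eta_n=0$ on $\{\delta_n=0\}$. Thus the convention $0/0:=0$ assigns $\beta_n=\rho_n=0$ there with no ambiguity, while on $\{\delta_n>0\}$ the ratios are ordinary quotients of a.s.\ finite random variables. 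This closes the induction: $C_n=\beta_n B_n$ and the remaining integrability assertions are precisely the content of Lemma \ref{propintegrable1}, and the base case $n=N$ (empty products equal to $1$) is handled by the same three steps. I expect step (ii) to be the only real obstacle: in the ND setting the boundedness of $(\Eb_n[\Delta S_{n+1}])^2/\Eb_n[(\Delta S_{n+1})^2]$ below $1$ keeps $\delta_n$ away from $0$, so $0/0$ never arises; without ND it can arise, and the point is exactly that the nonnegativity of the summands in $D_n$ makes $0$ the consistent value.
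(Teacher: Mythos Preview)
Your argument for (i) is essentially the paper's own proof: $\alpha_n$ is handled via $B_n\in\LPL$ from Lemma~\ref{propintegrable1}, $\delta_n$ via $\Eb[D_n\mid\Fc_n]\le N-n+1$ together with $S\in\LPL$, and $\eta_n$ via conditional Cauchy--Schwarz term by term, bounding each second factor by $\delta_n^{1/2}$. The paper's proof stops there, relying on the blanket convention $0/0:=0$ without checking that $\alpha_n=\eta_n=0$ on $\{\delta_n=0\}$; your part (ii) supplies this verification (correctly, via nonnegativity of the summands in $D_n$), and your observation that the backward inductions behind Lemma~\ref{propintegrable1} and this corollary must in principle run in parallel is also well taken, since the paper presents them sequentially.
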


\begin{lemma}
	\label{lem:rho_beta}
	We have, for all $n \in \Tc^+$, that:  
	\begin{align}
	&  \Eb \left[ \Delta S_n \left(\sum\limits_{i=n}^{N}   \omega_i \left(  \sum\limits_{j=n+1}^{i} \rho_j\Delta S_j \prod\limits_{k=j+1}^{i} \left(1 - \beta_k \Delta S_k \right)\right) \right) \Bigg \vert {\cal F}_{n-1}\right]  \\
	= \; &\Eb \left[ \Delta S_n \left(\sum\limits_{i=n}^{N}   \omega_i H_i \left(  \sum\limits_{j=n+1}^{i} \beta_j\Delta S_j \prod\limits_{k=j+1}^{i} \left(1 - \beta_k \Delta S_k \right)\right) \right) \Bigg\vert {\cal F}_{n-1}\right].
	\end{align}
\end{lemma}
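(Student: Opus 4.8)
The plan is to reverse the order of the two finite summations on each side. On the left-hand side, swapping $\sum_{i=n}^{N}\sum_{j=n+1}^{i}$ for $\sum_{j=n+1}^{N}\sum_{i=j}^{N}$ and pulling $\rho_j\Delta S_j$ out of the $i$-sum turns the inner bracket into $\sum_{j=n+1}^{N}\rho_j\Delta S_j\bigl(\sum_{i=j}^{N}\omega_i\prod_{k=j+1}^{i}(1-\beta_k\Delta S_k)\bigr)=\sum_{j=n+1}^{N}\rho_jB_j$, with $A,B$ as in \eqref{eq:ABCD}. The same interchange on the right-hand side, with $\beta_j$ replacing $\rho_j$ and the factor $H_i$ kept inside the $i$-sum, produces $\sum_{j=n+1}^{N}\beta_j\Delta S_j\widehat A_j$, where $\widehat A_j:=\sum_{i=j}^{N}\omega_iH_i\prod_{k=j+1}^{i}(1-\beta_k\Delta S_k)$. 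The point of this rewriting is that, by the very definitions \eqref{eq:alpha} and \eqref{eq:eta} (combined with \eqref{eq:ABCD}), $\Eb[\Delta S_jA_j\mid\Fc_{j-1}]=\alpha_j$ and $\Eb[\Delta S_j\widehat A_j\mid\Fc_{j-1}]=\eta_j$, and these conditional expectations are finite by Corollary \ref{corowelldefined} and Lemma \ref{propintegrable1}.

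Next I would handle each summand $j\in\{n+1,\dots,N\}$ separately. Because $j-1\ge n$, we have $\Fc_{n-1}\subseteq\Fc_{n}\subseteq\Fc_{j-1}$, and the factors $\Delta S_n$, $\rho_j$, $\beta_j$ are all $\Fc_{j-1}$-measurable (the first since $n\le j-1$, the latter two by predictability). The local square integrability of $A$ and $B$ from Lemma \ref{propintegrable1} together with $S\in\LPL$ makes $\Delta S_n\rho_jB_j$ conditionally integrable given $\Fc_{j-1}$, so the tower property with conditioning field $\Fc_{j-1}$ gives $\Eb[\Delta S_n\rho_jB_j\mid\Fc_{n-1}]=\Eb[\Delta S_n\rho_j\,\Eb[\Delta S_jA_j\mid\Fc_{j-1}]\mid\Fc_{n-1}]=\Eb[\Delta S_n\rho_j\alpha_j\mid\Fc_{n-1}]$, and identically $\Eb[\Delta S_n\beta_j\Delta S_j\widehat A_j\mid\Fc_{n-1}]=\Eb[\Delta S_n\beta_j\eta_j\mid\Fc_{n-1}]$. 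Summing over $j$, the claimed identity reduces to showing $\rho_j\alpha_j=\beta_j\eta_j$ almost surely for every $j\in\Tc^{+}$.

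That pointwise identity is immediate on $\{\delta_j\neq 0\}$, where both sides equal $\alpha_j\eta_j/\delta_j$ by \eqref{eq:beta}. On the $\Fc_{j-1}$-measurable set $\{\delta_j=0\}$ one has $\Delta S_j^2D_j=0$ a.s., and then the Cauchy--Schwarz bounds $\Delta S_j^2A_j^2\le(N-j+1)\,\Delta S_j^2D_j$ and $\Delta S_j^2\widehat A_j^2\le\bigl(\sum_{i=j}^{N}\omega_iH_i^2\bigr)\Delta S_j^2D_j$ (the prefactor being a.s. finite since $\sum_i\Eb[H_i^2]<\infty$) force $\Delta S_jA_j=\Delta S_j\widehat A_j=0$ a.s., hence $\alpha_j=\eta_j=0$ there and, with the convention $0/0=0$, both sides vanish; this is exactly the reasoning behind Corollary \ref{corowelldefined}.

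I expect the one genuinely delicate point to be the justification of the iterated conditioning in the second paragraph when $\delta_j$ is allowed to vanish — precisely the regime excluded by the ND condition — that is, checking that the conditional expectations $\Eb[\Delta S_jA_j\mid\Fc_{j-1}]$ and $\Eb[\Delta S_j\widehat A_j\mid\Fc_{j-1}]$ exist and may be pulled past the $\Fc_{j-1}$-measurable factors $\Delta S_n$, $\rho_j$, $\beta_j$ inside $\Eb[\,\cdot\mid\Fc_{n-1}]$. Once Lemma \ref{propintegrable1} is in place ($A\in\LP$, $B\in\LPL$, and $\Eb[A_n\mid\Fc_n]\le N-n+1$), this amounts to routine bookkeeping rather than a new idea.
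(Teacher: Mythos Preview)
Your proposal is correct and follows essentially the same approach as the paper's proof: swap the order of summation on each side, then reduce the identity (via the tower property with conditioning field $\Fc_{j-1}$) to the pointwise relation $\rho_j\alpha_j=\beta_j\eta_j$, which is immediate from \eqref{eq:beta}. Your treatment is in fact more careful than the paper's, which simply writes out the swapped sums and asserts equality without explicitly discussing integrability or the $\{\delta_j=0\}$ set.
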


\begin{proof}
The desired equality is proved by noticing $\beta_n \eta_n = \alpha_n \rho_n$ from \eqref{eq:beta}. Please see Section \ref{proof:rho_beta} in the online appendix for the complete proof.
\end{proof}

We are now ready to show the three assertions in Theorem \ref{maintheorem1} and complete this task in four parts.

\begin{proof}[Proof to Theorem \ref{maintheorem1}] 

	\textbf{Part 1:} We first show \eqref{eq:H_thm} in Assertion (c) holds by backward induction. 
	We fix an integer $n \in \Tc$.
	When $k = n + 1$, \eqref{eq:H_thm} is trivial. 
	Next suppose \eqref{eq:H_thm} holds for all $k=n+1, n, n-1, \cdots,l+1$. 
	Applying the assumption for $k=l+1$, and using $G_l(\xi^\star) = G_{l-1}(\xi^\star) + \xi^\star_l \, \Delta S_l$ and the expression of $\xi^\star$ from \eqref{eq:op_xi}, we verify that \eqref{eq:H_thm} also holds for $k=l$.
	Assertion (c) is now proved.
	\\[2ex]
	\textbf{Part 2:} We show 
	$V(c) = \Jc(\xi^\star;c) < \infty$ for any $c \in \Rb$, which is done by checking $\Eb [\omega_n (H_n - c - G_n(\xi^\star))^2] < \infty$ for all $n \in \Tc$.
	To achieve this purpose, we obtain 
	{\small 
		\begin{align}
		\Eb \left(\omega_n \left(H_n - c - G_n(\xi^\star )\right)^2\right) 
			&= \Eb \Big( \omega_n \Big( H_n  - \sum_{i=0}^{n} \rho_i \Delta S_i \prod_{j=i+1}^{n} \left(1 - \beta_j \Delta S_j \right)   -   c  \prod_{i=0}^{n} \left(1 - \beta_i \Delta S_i\right) \Big)^2\Big)  \\
			&  \tag*{\text{(take $k=0$ in \eqref{eq:H_thm})}}\\
			&\leq 3  \, \Eb \Big(\omega_n  H_n^2 + \omega_n \Big(  \sum_{i=0}^{n} \rho_i \Delta S_i \prod_{j=i+1}^{n} \left(1 - \beta_j \Delta S_j \right)    \Big)^2 \\
			&\quad + \omega_n  c^2 \prod_{i=0}^{n} \left(1 - \beta_i \Delta S_i \right)^2 \Big) 
			 \tag*{\text{(by Cauchy-Schwatz)}} \\
			&\leq 3 \, \Eb \left( H_n^2 \right)  + 3  \, \Eb \Big( \omega_n \Big(  \sum_{i=0}^{n} \rho_i \Delta S_i \prod_{j=i+1}^{n} \left(1 - \beta_j \Delta S_j \right)    \Big)^2 \Big)  + 3c^2 \, \Eb (D_0)  \\
			& \tag*{($\omega_n \in [0,1]$ \text{ and } \eqref{eq:ABCD})} \\
			&\leq 3 \, \Eb \left( H_n^2 \right)  + 3 (n+1) \sum_{i=0}^{n} \Eb \Big( \rho_i^2 \Delta S_i^2   \sum_{j=i}^{N}  \omega_j  \prod_{l=i+1}^{j} \left(1 - \beta_l \Delta S_l \right)^2      \Big) \\
			&\quad + 3c^2 (N+1)   \tag*{\text{(take $\Eb$ for $D_0$ in \eqref{AKDKleq})}}\\ 
			&= 3\,  \Eb \left( H_n^2 \right)  + 3c^2 (N+1)+ 3 (n+1) \sum_{i=0}^{n} \Eb \left( \frac{\eta_i^2}{\delta_i}    \right) 
			 \tag*{\text{(use \eqref{eq:beta} and \eqref{eq:delta})}}  \\
			&\le  3\,  \Eb \left( H_n^2 \right)  + 3c^2 (N+1) 
			 + 3(n+1) \sum_{i=0}^{n} \Eb \Bigg[ \frac{1}{\delta_i} \, \Eb \Bigg[ \sum_{j=i}^{N} H_j^2  \Big\vert {\cal F}_{i-1}\Bigg] \\
			& \qquad \cdot \underbrace{\Eb \Bigg[\Delta S_i^2  \sum_{j=i}^{N} \omega_j \prod_{k=i+1}^{j} (1 - \beta_k \Delta S_k)^2  \Big\vert {\cal F}_{i-1}\Bigg]}_{=\delta_i} \Bigg]  
			  \tag*{(\text{use }\eqref{eq:eta} \text{ and  H\"older})} \\
			&= 3\,  \Eb \left( H_n^2 \right)  + 3c^2 (N+1)  + 3(n+1) \sum_{i=0}^{n}  \sum_{j=i}^N \, \Eb[H_j^2] <+\infty. 
		\end{align}
	} 
	In particular, we obtain $V(c) = \Jc(\xi^\star;c) < \infty$ without imposing the ND condition and $\xi^\star \in \Lc^2(\Pb)$.
	\\[2ex]
	\textbf{Part 3:} We show $\xi^\star = \xi^\star(c)$ given by \eqref{eq:op_xi} satisfies the sufficient condition \eqref{eq:op_cond} in Proposition \ref{prop:op}, and hence is optimal to Problem \eqref{eq:main_1}. 
	The proof below is based on backward induction. 
	
	When $n = N$, by using \eqref{eq:G}, we have 
	$	\Eb \big[\omega_N (H_N - c - G_N(\xi^\star)) \cdot \Delta S_N \big| \Fc_{N-1}\big] 
		=  \Eb \big[\omega_N H_N \Delta S_N \big| \Fc_{N-1}\big]  - \xi_N^\star \, \Eb \big[\omega_N  \Delta S_N^2 \big| \Fc_{N-1}\big]  - (c + G_{N-1}(\xi^\star)) \, \Eb \big[\omega_N  \Delta S_N \big| \Fc_{N-1}\big] 
		= \eta_N -  \delta_N \, \xi_N^\star - \alpha_N \, (c + G_{N-1}(\xi^\star))$, 
	which vanishes with $\xi_N^\star = \rho_N - \beta_N \, (c + G_{N-1}(\xi^\star)) $, where $\rho_N = \eta_N / \delta_N$ and $\beta_N = \alpha_N / \delta_N$ by \eqref{eq:beta}. 
	
	Next suppose the desired statement is true for $n = N, N-1, \cdots, k+1$. We aim to prove the same statement holds for $n=k$ as well. We first recall a useful identity  (which can be proven by induction)
	\begin{align}\label{eqINDENTITYone}
		\prod_{i=k+1}^{l} (1 - a_i) =  1- \sum_{i=k+1}^{l} a_i \prod_{j=i+1}^{l} (1 - a_j),
	\end{align}
	where $k$ and $l$ are fixed integers, and $a = (a_n)$ is any sequence. 
	We then obtain 
	{\small 
		\begin{align}
			&\Eb \Bigg[ \Bigg(\sum_{n=k}^N \omega_n (H_n - c - G_n(\xi^\star)) \Bigg) \cdot \Delta S_k \Bigg| \Fc_{k-1}\Bigg] \\
			= \; &  \Eb \left[ \left(\sum_{n=k}^{N}   \omega_n \left(H_n  - \sum_{i=k+1}^{n} \rho_i \Delta S_i \prod_{j=i+1}^{n} \left(1 - \beta_j \Delta S_j \right)\right) \right)\Delta S_k \Bigg\vert {\cal F}_{k-1}\right] \\
			& - \Eb \left[ \left(\sum_{n=k}^{N} \omega_n  \left(c + G_{k}(\xi^{\star})\right) \prod_{j=k+1}^{n} \left(1 - \beta_j \Delta S_j\right) \right)\Delta S_k \Bigg\vert {\cal F}_{k-1}\right]  \tag*{(\text{by } \eqref{eq:H_thm})} \\
			= \; & \Eb \Bigg[ \Delta S_k\left(  \sum_{n=k}^{N}   \omega_n  H_n \right)     \Bigg \vert {\cal F}_{k-1}\Bigg] \\
			& - \Eb \Bigg[ \Delta S_k  \sum_{n=k}^{N}   \omega_n \Big(  \sum_{i=k+1}^{n} \rho_i \Delta S_i \prod_{j=i+1}^{n} \left(1 - \beta_j \Delta S_j \right)\Big)   \Bigg \vert {\cal F}_{k-1}\Bigg]  \\ 
			&
			-\left(c + G_{k-1}(\xi^\star)\right) \Eb\Bigg[ \Bigg(\sum_{n=k}^{N} \omega_n   \prod_{j=k+1}^{n} \left(1 - \beta_j \Delta S_j \right) \Bigg)\Delta S_k \Bigg\vert {\cal F}_{k-1}\Bigg]   
			\\& 
			- \xi_k^\star \, \Eb \Bigg[ \Bigg(\sum_{n=k}^{N} \omega_n    \prod_{j=k+1}^{n} \left(1 - \beta_j \Delta S_j\right) \Bigg) \Delta S_k^2 \Bigg \vert {\cal F}_{k-1}\Bigg]  
			 \tag*{(\text{by } \eqref{eq:G})} 
			\\
			= \; & - \Eb\Bigg[ \Delta S_k \Bigg(\sum_{n=k}^{N}   \omega_n H_n \Bigg(  \sum_{i=k+1}^{n} \beta_i \Delta S_i \prod_{j=i+1}^{n} \left(1 - \beta_j \Delta S_j \right)\Bigg) \Bigg) \Bigg \vert {\cal F}_{k-1}\Bigg]\\
			& +  \Eb \Bigg[ \Delta S_k\left(  \sum_{n=k}^{N}   \omega_n  H_n \right)     \Bigg \vert {\cal F}_{k-1}\Bigg]  \tag*{(\text{Lemma \ref{lem:rho_beta}})} \\ 
			& - \alpha_k  \left(c + G_{k-1}(\xi^\star)\right) - \delta_k  \xi_k^\star   \tag*{(\text{by } \eqref{eq:eta}, \eqref{eq:A_eq}, \eqref{eq:delta})} \\
			= \; & \Eb \Bigg[ \Delta S_k \Bigg[\sum_{n=k}^{N}   \omega_n H_n  \prod_{i=k+1}^{n} \left(1 - \beta_i \Delta S_i \right)\Bigg]  \Bigg \vert {\cal F}_{k-1}\Bigg] 
			 - \alpha_k  \left(c + G_{k-1}(\xi^\star)\right) - \delta_k  \xi_k^\star
			 \tag*{\text{(by \eqref{eqINDENTITYone})}}\\
			= \; & \eta_k - \alpha_k  \left(c + G_{k-1}(\xi^\star)\right) - \delta_k  \xi_k^\star = 0,   \tag*{\text{(by \eqref{eq:eta} and \eqref{eq:op_xi})}}
		\end{align}
	}
	which confirms the induction indeed holds for $n=k$. 
	
	By the definition in \eqref{eq:op_xi}, $\xi^\star$ is predictable and self-financing, and hence solves Problem \eqref{eq:main_1}.
	\\[2ex]
	\textbf{Part 4:} We show that the value function $V(c)$ is given by \eqref{eq:value}.  
	Taking $k=1$ in \eqref{eq:H_thm} for all $n \in \Tc$, we get 
	{\small 
		\begin{align}
			\sum_{n=0}^{N} \Eb \big( \omega_n \left(H_n - c - G_n(\xi^\star) \right)^2 \big) 
			&=  \sum_{n=0}^{N} \Eb \Bigg( \omega_n \Bigg(  H_n  - \sum_{i=1}^{n} \rho_i \Delta S_i \prod_{j=i+1}^{n} \left(1 - \beta_j \Delta S_j \right)   -    c   \prod_{i=1}^{n} \left(1 - \beta_i \Delta S_i \right) \Bigg)^2 \Bigg) \\
			&= c^2 \, \Eb \Bigg(\sum_{n=0}^N \omega_n  \prod_{i=1}^{n} \left(1 - \beta_i \Delta S_i \right)^2 \Bigg) 
			- 2c \, \sum_{n=0}^N \Eb \Bigg( \omega_n H_n \prod_{i=1}^{n} \left(1 - \beta_i \Delta S_i \right) \Bigg) \\
			&\quad + \sum_{n=0}^{N} \Eb \Bigg( \omega_n \Bigg(  H_n  - \sum_{i=1}^{n} \rho_i\Delta S_i \prod_{j=i+1}^{n} \left(1 - \beta_j \Delta S_j \right)    \Bigg)^2 \Bigg) + 2c \cdot \mathbb{CT}, \qquad \label{eq:cal_V}
		\end{align} 
	}
	where the Cross-Term $\mathbb{CT} $ is defined by
	\begin{align}
	\mathbb{CT} :=  \sum_{n=0}^{N} \Eb \left( \omega_n \left(  \sum_{i=1}^{n} \rho_i\Delta S_i \prod_{j=i+1}^{n} \left(1 - \beta_j \Delta S_j \right)    \right) \,   \prod_{k=1}^{n} \left(1 - \beta_k \Delta S_k \right) \right).
	\end{align}
	
	Using \eqref{eq:A_eq}, we obtain that
	\begin{align}
		\label{eq:Z_pos}
		\Eb \Bigg(\sum_{n=0}^N \omega_n  \prod_{i=1}^{n} \left(1 - \beta_i \Delta S_i \right)^2 \Bigg) 
		= \Eb[A_0] = \sum_{n=0}^N \Eb[Z_n] \ge 0,
	\end{align}
	where $Z_n$ is defined in \eqref{eq:Z}.  Also by \eqref{eq:Z}, the second term in \eqref{eq:cal_V} becomes $2c \sum_{n=0}^N \Eb(H_n Z_n)$. 
	By comparing with \eqref{eq:value}, we see that Assertion (b) is proved if $\mathbb{CT} = 0$, which is done in the sequel:
	%
	{\small 
		\begin{align}
			\mathbb{CT} 
			&= \Eb \Bigg[\sum_{i=1}^N   \prod_{k=1}^{i} \left(1 - \beta_k \Delta S_k \right) \, 
			\Eb \Bigg[ \sum_{n=i}^N \omega_n \rho_i\Delta S_i  \prod_{j=i+1}^{n} \left(1 - \beta_j \Delta S_j \right) 
			\cdot (1 - \beta_i \Delta S_i)  \prod_{k=j+1}^{n} \left(1 - \beta_k \Delta S_k \right) \Bigg| \Fc_{i-1}\Bigg] \Bigg] \\
			&= \Eb \Bigg[\sum_{i=1}^N   \prod_{k=1}^{i} \left(1 - \beta_k \Delta S_k \right) 
			\cdot \Eb \Bigg[ \sum_{n=i}^N \omega_n \rho_i\Delta S_i  \prod_{j=i+1}^{n} \left(1 - \beta_j \Delta S_j \right)^2  (1 - \beta_i \Delta S_i)   \Bigg| \Fc_{i-1}\Bigg] \Bigg] \\
			&= \Eb \Bigg[\sum_{i=1}^N   \prod_{k=1}^{i} \left(1 - \beta_k \Delta S_k \right) \rho_i \, 
			\underbrace{\Eb \Bigg[ \Delta S_i  \sum_{n=i}^N \omega_n  \prod_{j=i+1}^{n} \left(1 - \beta_j \Delta S_j \right)^2   \Bigg| \Fc_{i-1}\Bigg]}_{ = \alpha_i \text{ by } \eqref{eq:alpha} \text{ and } \eqref{eq:A_eq}} \Bigg] \\
			&\quad - \Eb \Bigg[\sum_{i=1}^N   \prod_{k=1}^{i} \left(1 - \beta_k \Delta S_k \right) \rho_i \beta_i \, \cdot 
			\underbrace{\Eb \Bigg[ \Delta S_i^2  \sum_{n=i}^N \omega_n  \prod_{j=i+1}^{n} \left(1 - \beta_j \Delta S_j \right)^2   \Bigg| \Fc_{i-1}\Bigg]}_{ = \delta_i \text{ by } \eqref{eq:delta}} \Bigg] \\
			&= \Eb \Bigg[\sum_{i=1}^N   \prod_{k=1}^{i} \left(1 - \beta_k \Delta S_k \right) \cdot \rho_i \underbrace{(\alpha_i - \beta_i \delta_i)}_{=0 \text{ by } \eqref{eq:beta}} \Bigg] = 0, 
	\end{align}}
	\normalsize
	where in the first line we have used $\prod_{k=1}^n = \prod_{k=1}^i \, \prod_{k=i}^i \, \prod_{k=j+1}^n$ to simplify the computations on conditional expectation.

	The proof to the main theorem, Theorem \ref{maintheorem1}, is now complete.
\end{proof}

\section*{Acknowledgments}
We would like to thank Area Editor Professor Agostino Capponi, an anonymous Associate Editor, and an anonymous referee for their helpful comments and critics on early versions of this paper.
The research of Jun Deng is supported by the National Natural Science Foundation of China (11501105).
Bin Zou acknowledges support through a start-up grant from the University of Connecticut.

\section*{Supplementary Material}

Please refer to the online supplementary material for an example of Problem \eqref{eq:ran_prob} and the complete proofs to Lemma \ref{propintegrable1}, Corollary \ref{corowelldefined}, and Lemma \ref{lem:rho_beta}.

\bibliographystyle{elsarticle-num} 
\bibliography{ref}

\appendix

\section{Extension to Multiple Risky Assets}
\label{sec_multiple}
In the main context of the paper, we consider  only one risky  asset $S$. 
In this section, we generalized the main results in Theorem \ref{maintheorem1} to the case of multiple risky assets.

In the remaining of this section, introduce an $M$-dimensional process $S =(\vec{S}_n)_{n \in \Tc}$ to represent the $M$ risky assets in the market, where $\vec S_n = (S_{n1}, S_{n2},\cdots, S_{nM})^\top$, with $\ ^\top$ denoting the usual transpose operator. 
A trading strategy is now denoted by $\xi = (\vec \xi_n)_{n \in \Tc^+}$, where $\vec \xi_n = (\xi_{n1}, \xi_{n2},\cdots, \xi_{nM})^\top$ and $\xi_{ij}$ is the number of shares invested in the $j$-th risky asset from time $i-1$ to time $i$.
Given a strategy $\xi$, we define the corresponding gain process $G(\xi)=(G_n(\xi))_{n \in \Tc}$ by
\begin{align}
	\label{eq:G_multi}
	G_n(\xi) = \sum_{i=1}^n \, \vec{\xi}_i^\top \, \Delta \vec S_i, \quad n \in \Tc^+, \quad \text{ and } \quad 
	G_0(\xi) = 0. \quad 
\end{align}

Similar to Section \ref{secmainresult}, we define the following $M$-dimensional predictable processes $\beta$ and $\rho$ by 
\begin{align}
	\vec \beta_n &:= \frac{\vec \alpha_n}{\vec \delta_n} \qquad \text{ and } \qquad   \vec \rho_n :=   \frac{\vec \eta_n}{\vec \delta_n},  \quad \text{ for  all } n \in \Tc^+, \quad 
	\label{eq:beta_multi}
\end{align}
where $\alpha = (\vec \alpha_n)_{n \in \Tc^+}$, $\eta = (\vec \eta_n)_{n \in \Tc^+}$, and $\delta = (\vec \delta_n)_{n \in \Tc^+}$ are given by  
\begin{align}
	\vec	\alpha_n &:= \Eb\bigg[  \Delta \vec S_n \bigg[\sum_{i=n}^{N} \omega_i \prod_{j=n+1}^{i} (1 - \vec \beta_j^\top \Delta \vec S_j)\bigg] \bigg\vert {\cal F}_{n-1}\bigg],  
	\\
	\vec \eta_n &:= \Eb\bigg[ \Delta \vec S_n \bigg[\sum_{i=n}^{N} H_i \omega_i \prod_{j=n+1}^{i} (1 - \vec \beta_j^\top \Delta \vec S_j)\bigg] \bigg\vert {\cal F}_{n-1}\bigg], \quad
	\label{eq:eta_multi}
	\\
	\vec	\delta_n &:= \Eb \bigg[ \Delta \vec S_n^\top \Delta \vec S_n \bigg[\sum_{i=n}^{N} \omega_i \prod_{j=n+1}^{i} (1 - \vec \beta_j^\top  \Delta \vec S_j)^2 \bigg] \bigg \vert {\cal F}_{n-1}\bigg].
	\label{eq:delta_multi}
\end{align}

We now present the multi-dimensional version of Theorem \ref{maintheorem1} as follows.

\begin{theorem}
	\label{maintheorem1_multi} 
	For any fixed initial capital $c $, we define $\xi^\star(c) : = \left(\vec \xi^\star_n(c)\right)_{n \in \Tc^+}$ by
	\begin{align}
		\label{eq:op_xi_multi}
		\vec \xi_n^\star(c) := \vec \rho_n - \vec \beta_n  \big( c + G_{n-1}(\vec \xi^\star (c) ) \big),  
	\end{align}
	where $\vec \beta_n$ and $\vec \rho_n$ are defined in \eqref{eq:beta_multi} and $G_n$ is defined in \eqref{eq:G_multi}.
	The following results hold true:
	\begin{itemize}
		\item[\rm (a)] $V(c) = \Jc(\xi^\star; c) < \infty$ for all $c \in \Rb$, and the strategy $\xi^\star(c)$ defined in \eqref{eq:op_xi_multi} solves Problem \eqref{eq:main_1}.
		
		\item[\rm (b)] The value function   is given by 
		\begin{align}
			\label{eq:value_multi}
			V(c) &= c^2  \, \sum_{n=0}^{N}  \Eb \left( Z_n \right) - 2c \sum_{n=0}^{N} \Eb \left(    Z_n H_n  \right) \\
			&\quad +
			\sum_{n=0}^{N} \Eb \left( \omega_n \left(  H_n  - \sum_{i=1}^{n} \vec \rho_i^\top \Delta \vec S_i \prod_{j=i+1}^{n} \left(1 - \vec \beta_j^\top \Delta \vec S_j \right)    \right)^2 \right), 
		\end{align}
		where $Z = (Z_n)_{n \in \Tc}$ is defined by 
		\begin{align}
			\label{eq:Z_multi}
			Z_n := \omega_n \, \prod_{i=1}^n \, (1 - \vec \beta_i^\top \Delta \vec S_i). 
		\end{align}
		
		\item[\rm (c)] Let $n$ be an arbitrary but fixed non-negative integer in $\Tc$, i.e., $n = 0,1,\cdots,N$.
		We have, for all $k = 0, 1, \cdots, n+1$, that  (setting $G_{-1} = 0$)
		\begin{align}
			\label{eq:H_thm_multi}		
			H_n - c  - G_n(\vec \xi^\star(c)) = H_n - \sum_{i=k}^n \vec \rho_i^\top \Delta \vec S_i \prod_{j=i+1}^{n} (1 - \vec \beta_j^\top \Delta \vec S_j) 
			 - \big(c + G_{k-1}(\vec \xi^\star(c)) \big) \prod_{i=k}^n (1 - \vec \beta_i^\top \Delta \vec S_i).
		\end{align}
	\end{itemize}
\end{theorem}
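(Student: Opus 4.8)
The plan is to reduce Theorem \ref{maintheorem1_multi} to Theorem \ref{maintheorem1} by a direct translation of every scalar object into its vector analogue, checking at each step that the proof of the one-asset case goes through verbatim once multiplications $\xi_n \Delta S_n$ are replaced by inner products $\vec\xi_n^\top \Delta\vec S_n$ and squares $\Delta S_n^2$ by $\Delta\vec S_n^\top \Delta\vec S_n = |\Delta\vec S_n|^2$. Concretely, I would first restate the vector versions of the preliminary objects: the processes $A, B, C, D$ with $B_n = A_n \Delta\vec S_n$ now $M$-dimensional and $C_n = \vec\beta_n^\top B_n$ scalar, and the recursion \eqref{eq:A_eq} becoming $\Eb[A_n | \Fc_n] = \Eb[A_{n+1} | \Fc_n] + \omega_n - \vec\alpha_{n+1}^\top \vec\beta_{n+1}$. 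The key algebraic identity $\vec\alpha_n^\top \vec\beta_n = \vec\alpha_n^\top \vec\alpha_n / \vec\delta_n$ and $\vec\beta_n^\top \vec\eta_n = \vec\alpha_n^\top \vec\eta_n / \vec\delta_n$ replace $\beta_n \eta_n = \alpha_n \rho_n$, and one observes $\vec\alpha_n^\top \vec\beta_n = \vec\beta_n^\top \vec\delta_n \vec\beta_n$ — wait, more carefully, $\vec\beta_n \vec\delta_n = \vec\alpha_n$ componentwise, so $\vec\alpha_n^\top \vec\beta_n = \vec\delta_n \vec\beta_n^\top \vec\beta_n$, which is the scalar quantity that must be shown nonnegative to make the $D$-bound and hence Lemma \ref{propintegrable1} go through. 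This is the first place where care is needed: in the scalar case $\alpha_n \beta_n = \alpha_n^2/\delta_n \ge 0$ is automatic, and here one needs $\vec\beta_n^\top \vec\delta_n \vec\beta_n \ge 0$, which holds because $\vec\delta_n \ge 0$ is a scalar (the sum inside is a conditional expectation of a nonnegative quantity), so it reduces to $\delta_n |\vec\beta_n|^2 \ge 0$.

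Second, I would re-run the four-part proof of Theorem \ref{maintheorem1}. Part 1 (the backward-induction identity \eqref{eq:H_thm_multi}) is purely algebraic: using $G_l(\vec\xi^\star) = G_{l-1}(\vec\xi^\star) + \vec\xi_l^{\star\top}\Delta\vec S_l$ together with \eqref{eq:op_xi_multi}, the telescoping argument and the identity \eqref{eqINDENTITYone} — applied with $a_i = \vec\beta_i^\top \Delta\vec S_i$ — carry over with no change. Part 2 (finiteness of $V(c)$) uses Cauchy–Schwarz, $\omega_n \in [0,1]$, and the bound $\Eb[D_0] \le N+1$; the only modification is that $\rho_i^2 \Delta S_i^2$ becomes $(\vec\rho_i^\top \Delta\vec S_i)^2 \le |\vec\rho_i|^2 |\Delta\vec S_i|^2$ and the final step invoking $\eta_i^2/\delta_i$ becomes $\vec\eta_i^\top \vec\eta_i / \vec\delta_i$-type terms controlled by Hölder exactly as before, since $\vec\delta_n$ is still a scalar. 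Part 3 (verifying the optimality condition of Proposition \ref{prop:op}, whose vector form just has $\Delta S_i$ replaced by $\Delta\vec S_i$ so that \eqref{eq:op_cond} becomes an $\Rb^M$-valued equation) is the backward induction that occupies most of the scalar proof; I would check that at the inductive step the identities labelled ``by \eqref{eq:eta}, \eqref{eq:A_eq}, \eqref{eq:delta}'' produce $\vec\eta_k - \vec\alpha_k (c + G_{k-1}) - \vec\delta_k \vec\xi_k^\star$, which vanishes precisely because $\vec\xi_k^\star = \vec\rho_n - \vec\beta_n(c+G_{k-1})$ with $\vec\rho_k = \vec\eta_k/\vec\delta_k$, $\vec\beta_k = \vec\alpha_k/\vec\delta_k$ — note this is a vector equation but each component involves division by the same scalar $\vec\delta_k$, so it is equivalent to $\vec\delta_k \vec\xi_k^\star = \vec\eta_k - \vec\alpha_k(c+G_{k-1})$, which is what the computation yields. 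Part 4 ($\mathbb{CT}=0$ and hence formula \eqref{eq:value_multi}) uses only $\vec\alpha_i - \vec\beta_i \vec\delta_i = 0$, again componentwise.

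The main obstacle — or rather the one subtlety worth flagging explicitly — is the possible degeneracy of $\vec\delta_n$: when $\vec\delta_n = 0$ one needs to know that $\vec\alpha_n = 0$ and $\vec\eta_n = 0$ as well, so that the convention $0/0 = 0$ makes $\vec\beta_n$ and $\vec\rho_n$ well-defined. In the scalar case this follows from Cauchy–Schwarz applied to $\Delta S_n \sqrt{A_n}$ against $\sqrt{A_n}$ on the event $\{\delta_n = 0\}$; in the vector case the same Cauchy–Schwarz bound gives $|\vec\alpha_n|^2 \le \vec\delta_n \cdot \Eb[A_n \mid \Fc_{n-1}]$ and similarly for $\vec\eta_n$ with $H$ inside, so $\vec\delta_n = 0$ indeed forces $\vec\alpha_n = \vec\eta_n = 0$. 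Once this is in hand, Lemma \ref{propintegrable1} and Corollary \ref{corowelldefined} transfer, and the rest is bookkeeping. Accordingly, the proof I would write is short: state the vector analogues of \eqref{eq:ABCD}–\eqref{eq:A_eq}, Lemma \ref{propintegrable1}, and Lemma \ref{lem:rho_beta}; note their proofs are identical modulo the $\Delta S_n \mapsto \Delta\vec S_n$, $(\Delta S_n)^2 \mapsto |\Delta\vec S_n|^2$ substitutions and the observation that $\vec\delta_n$ remains scalar; and then run the four-part argument of the proof of Theorem \ref{maintheorem1} line by line, pointing to the (few) places where transposes appear.
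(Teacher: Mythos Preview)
The paper does not give an independent proof of Theorem \ref{maintheorem1_multi}; it simply states the result and relies on Remark \ref{rem:multi}, which asserts that Theorem \ref{maintheorem1} ``hold[s] under the corresponding vector notations.'' Your plan to transcribe the four-part proof of Theorem \ref{maintheorem1} with scalar products replaced by inner products is therefore exactly what the paper intends, and in spirit your approach matches the paper's.

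There is, however, a genuine gap in your Part 3. You take $\vec\delta_n$ to be a scalar --- which is what \eqref{eq:delta_multi} literally says, since $\Delta\vec S_n^\top\Delta\vec S_n=|\Delta\vec S_n|^2$ for column vectors --- and then assert that the inductive computation of the first-order condition produces $\vec\eta_k-\vec\alpha_k(c+G_{k-1})-\vec\delta_k\vec\xi_k^\star$. But the term coming from $\vec\xi_k^{\star\top}\Delta\vec S_k$ in the gain process is
\[
\Eb\Bigg[\Delta\vec S_k\,(\vec\xi_k^{\star\top}\Delta\vec S_k)\sum_{n=k}^{N}\omega_n\prod_{j=k+1}^{n}(1-\vec\beta_j^\top\Delta\vec S_j)\ \Bigg|\ \Fc_{k-1}\Bigg]
=\Eb\Bigg[\Delta\vec S_k\Delta\vec S_k^\top\,[\cdots]\ \Bigg|\ \Fc_{k-1}\Bigg]\,\vec\xi_k^\star,
\]
which involves the \emph{outer} product $\Delta\vec S_k\Delta\vec S_k^\top$, an $M\times M$ matrix. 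The coefficient of $\vec\xi_k^\star$ is therefore a matrix, not a scalar multiple of the identity, and setting $\vec\xi_k^\star=\vec\eta_k/\vec\delta_k-(\vec\alpha_k/\vec\delta_k)(c+G_{k-1})$ with scalar $\vec\delta_k$ does \emph{not} make the optimality condition vanish (a one-period, two-asset example already shows this). The cure is to read $\vec\delta_n$ as the matrix $\Eb\big[\Delta\vec S_n\Delta\vec S_n^\top[\cdots]\,\big|\,\Fc_{n-1}\big]$ --- the paper's \eqref{eq:delta_multi} appears to carry a transposition typo --- and to define $\vec\beta_n,\vec\rho_n$ via its Moore--Penrose pseudoinverse. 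With that correction one has $\vec\delta_n\vec\beta_n=\vec\alpha_n$, $\vec\delta_n\vec\rho_n=\vec\eta_n$, and $\vec\alpha_n^\top\vec\beta_n=\vec\beta_n^\top\vec\delta_n\vec\beta_n\ge 0$ by positive semidefiniteness; your four-part transcription then does go through, but the degeneracy analysis (singular $\vec\delta_n$ rather than the scalar case $\vec\delta_n=0$) and the Cauchy--Schwarz estimates in Lemma \ref{propintegrable1} and Part 2 need to be redone with matrix norms, since componentwise arguments no longer suffice.
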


\clearpage
\newpage 

\pagestyle{plain}
\onecolumn

\setcounter{page}{1}

\setcounter{section}{0}
\renewcommand\thesection{\alph{section}}

\setcounter{equation}{1}
\renewcommand{\theequation}{\thesection.\arabic{equation}}
\numberwithin{equation}{section}

\begin{center}
Online Supplementary Material for ``Quadratic  Hedging  for  Sequential Claims with Random Weights in Discrete Time"\\[1ex]
Jun Deng\footnote{School of Banking and Finance,
	University of International Business and Economics, Beijing, China. Email: jundeng@uibe.edu.cn} and Bin Zou\footnote{Corresponding author.   Department of Mathematics, University of Connecticut, 341 Mansfield Road U1009, Storrs, Connecticut 06269-1009, USA. Email: bin.zou@uconn.edu. Phone: +1-860-486-3921.}
\\[1ex]
\today
\end{center}

\section{Extra Example}
\label{sub:ex2}

In this example, we study a quadratic hedging problem under random horizon as formulated in \eqref{eq:ran_prob}. 
Consider a two-period binomial model $(\Omega, \Fc, ({\cal F}_{n})_{n=0,1,2}, \Pb)$  specified as follows: 
\begin{itemize}
	\item $\Omega = \{x_1, x_2, x_3, x_4\}$, ${\cal F}_0=\{\emptyset, \Omega\}$, ${\cal F}_1 = \sigma(\{x_1, x_2\}$, $\{x_3, x_4\})$, and ${\cal F}_2 = {\cal F} = 2^{\Omega}$.
	
	\item  $\Pb(x_1) = p^2$, $\Pb(x_2) = \Pb(x_3) = p q$, and $\Pb(x_4) = q^2$, where $0<p<1$ and $q = 1 - p$.  
	
	\item The stock price $S$ evolves by: at time $0$, $S_0= 1$; at time $1$, $S_1(\{x_1, x_2\}) = u$ and $S_1(\{x_3, x_4\}) = d$; at time $2$, 
	$S_2(\{x_1\}) = u^2$,     $S_2(\{x_2\}) = S_2(\{x_3\}) = ud$, and $S_2(\{x_4\}) = d^2$, where $0<d<1<u$.
\end{itemize}
We set up the contingent claim $H$ by 
\begin{align}
	\label{eq:cc_ex2}
	H_0 =a_0, \quad H_1 = a_1 \In_{\{x_1,x_2\}} + a_2 \In_{\{x_3,x_4\}}, \quad H_2 = b_1 \In_{\{x_1\}} + b_2 \In_{\{x_2\}} + b_3 \In_{\{x_3\}} + b_4 \In_{\{x_4\}},
\end{align}
where all the $a_i$'s and $b_i$'s  are constants. The random time $\tau$ is defined by 
\begin{align}
	\label{eq:tau_ex2}
	\tau = 0 \cdot \In_{\{x_1,x_2\}} + 1 \cdot \In_{\{x_3\}} + 2 \cdot \In_{\{x_4\}}.
\end{align}
To use the results from Theorem \ref{maintheorem1}, we require $\omega_n = \Pb(\tau = n | \Fc_n)$ for all $n=0,1,2$, which yields $\omega_0 = p$, $\omega_1 = p \In_{\{x_3,x_4\}}$, and  $\omega_2 = \In_{\{x_4\}}$.

Similar to the previous example, we carry out calculations by \eqref{eq:beta}-\eqref{eq:delta} and obtain 
$\alpha_2 = d(d-1) q  \In_{\{x_3,x_4\}}$, $\eta_2 = d(d-1) b_4 q \In_{\{x_3,x_4\}}$,  and $\delta_2 = d^2(d-1)^2 q \In_{\{x_3,x_4\}}$, implying $\beta_2 = \frac{\In_{\{x_3,x_4\}}}{d(d-1)}$ and  $\rho_2 = \frac {b_4 \In_{\{x_3,x_4\}}}{d(d-1)}$. 
At time 1, we compute  
$\alpha_1 = (d-1) pq$,  $\eta_1 = (d-1)a_2 pq$,  $\delta_1 = (d-1)^2 pq$, leading to $\beta_1 = \frac{1}{d-1}$ and $\rho_1 = \frac{a_2}{d - 1}$.

The sequence $(Z_n)_{n=0,1,2}$, defined in \eqref{eq:Z}, reads in this example as $Z_0 = p $ and $Z_1 = Z_2 = 0$. 
In turn, we get $\sum_{n=0}^2 \Eb[Z_n] = p $ and $\sum_{n=0}^N \Eb[Z_n H_n] =  a_0 p$. 
An application of Theorems \ref{maintheorem1} and \ref{thm:pricing} yields:

\begin{corollary}
	\label{cor:ex2}
	Let $H$ and $\tau$ be defined by \eqref{eq:cc_ex2} and \eqref{eq:tau_ex2}, and $\omega_n = \Pb(\tau = n | \Fc_n)$ for $n=0,1,2$.  For any initial capital $c$, the optimal hedging strategy $\xi^\star = \xi^\star(c)$ to Problem \eqref{eq:main_1} is given by 
	\begin{align}
		\label{eq:op_ex2}
		\xi_1^\star(c)  = \frac{a_2 - c}{d-1} \qquad \mbox{and} \qquad  \xi_2 ^\star(c) =  \frac{b_4 - a_2}{d(d-1)} \In_{\{x_3,x_4\}}.
	\end{align}
	The minimum capital $c^\star$ to Problem \eqref{pricingproblem} is $c^\star = a_0$.
\end{corollary}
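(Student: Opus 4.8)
The plan is to verify Corollary \ref{cor:ex2} by a direct substitution into the formulas already established in Theorems \ref{maintheorem1} and \ref{thm:pricing}, since essentially all of the work — computing $\alpha$, $\eta$, $\delta$, hence $\beta$, $\rho$, and $Z$ — has been carried out in the running text preceding the statement. First I would record the ingredients: $\beta_1 = \tfrac{1}{d-1}$, $\rho_1 = \tfrac{a_2}{d-1}$, $\beta_2 = \tfrac{\In_{\{x_3,x_4\}}}{d(d-1)}$, $\rho_2 = \tfrac{b_4 \In_{\{x_3,x_4\}}}{d(d-1)}$, together with the weights $\omega_0 = p$, $\omega_1 = p\In_{\{x_3,x_4\}}$, $\omega_2 = \In_{\{x_4\}}$ and the increments $\Delta S_1 = (u-1)\In_{\{x_1,x_2\}} + (d-1)\In_{\{x_3,x_4\}}$, $\Delta S_2 = (u d - u)\In_{\{x_2\}} + (u^2 - u)\In_{\{x_1\}} + (d^2 - d)\In_{\{x_4\}} + (ud - d)\In_{\{x_3\}}$.

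Next I would apply the recursive strategy formula \eqref{eq:op_xi}. At $n=1$ we have $G_0(\xi^\star) = 0$, so $\xi_1^\star(c) = \rho_1 - \beta_1 c = \tfrac{a_2 - c}{d-1}$, which is the first claimed formula. At $n=2$ we have $G_1(\xi^\star) = \xi_1^\star \Delta S_1$, and one observes that the relevant quantity $\beta_2(c + G_1(\xi^\star))$ is supported on $\{x_3,x_4\}$, where $\Delta S_1 = d-1$; there $c + G_1(\xi^\star) = c + \tfrac{a_2-c}{d-1}(d-1) = a_2$, so $\xi_2^\star(c) = \rho_2 - \beta_2 a_2 = \tfrac{b_4 - a_2}{d(d-1)}\In_{\{x_3,x_4\}}$, matching \eqref{eq:op_ex2}. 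Finally, for the minimum capital I would invoke \eqref{eq:c_op}: since $Z_0 = \omega_0 = p$ and $Z_1 = Z_2 = 0$ (because $Z_n$ picks up the factor $(1-\beta_1\Delta S_1)$, which vanishes identically given $\beta_1 = 1/(d-1)$ and the fact that $\Delta S_1/(d-1)$ equals $1$ on $\{x_3,x_4\}$ — the support of $\omega_1$ and $\omega_2$), we get $\sum_n \Eb[Z_n] = p$ and $\sum_n \Eb[Z_n H_n] = \Eb[p H_0] = a_0 p$, whence $c^\star = a_0 p / p = a_0$.

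The only mild subtlety — and the step I would treat most carefully — is checking that the indicator bookkeeping is consistent: specifically that $1 - \beta_1 \Delta S_1$ and the products $\prod_{j}(1-\beta_j\Delta S_j)$ appearing in $Z_1, Z_2$ vanish on exactly the atoms where $\omega_1, \omega_2$ are supported (namely $\{x_3\}$ and $\{x_4\}$), so that $Z_1 = Z_2 = 0$ as claimed, and similarly that the term $\beta_2 (c + G_1(\xi^\star))$ need only be evaluated on $\{x_3,x_4\}$ because $\beta_2$ already carries the factor $\In_{\{x_3,x_4\}}$. Everything else is a routine plug-in, so the corollary follows immediately from the two main theorems once these indicator identities are recorded.
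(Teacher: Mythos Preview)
Your proposal is correct and follows essentially the same approach as the paper: the paper's argument is exactly this plug-in of the precomputed $\beta_1,\rho_1,\beta_2,\rho_2$ and $Z_n$ into the formulas \eqref{eq:op_xi} and \eqref{eq:c_op} from Theorems \ref{maintheorem1} and \ref{thm:pricing}, with the same indicator observation that $1-\beta_1\Delta S_1$ vanishes on $\{x_3,x_4\}$ so $Z_1=Z_2=0$. Your explicit check of the atom-by-atom bookkeeping is slightly more detailed than what the paper writes, but the logic is identical.
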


\begin{proposition}
	\label{prop:tau}
	Let the assumptions in Corollary \ref{cor:ex2} hold. We have: 
	
	\begin{itemize}
		\item[\rm (a)] The optimal strategy $\xi^\star(c^\star)$ with initial capital $c^\star=a_0$, where $\xi^\star$ is given by \eqref{eq:op_ex2}, replicates the contingent claim $H=(H_0, H_1, H_2)$ on $\Omega$, $\{x_3, x_4\}$, and $\{x_4\}$, respectively.
		
		\item[\rm (b)] The stopped market $S^\tau$ admits arbitrage.
	\end{itemize}
\end{proposition}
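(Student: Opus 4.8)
The plan is to verify Proposition \ref{prop:tau} directly, since the model is a concrete two-period binomial tree with only four atoms. Nothing deep is needed; the work is bookkeeping on the four scenarios $x_1, x_2, x_3, x_4$.

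\medskip

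\textbf{Part (a): replication.} First I would set $c = c^\star = a_0$ and substitute into the optimal strategy \eqref{eq:op_ex2}, so $\xi_1^\star = (a_2 - a_0)/(d-1)$ and $\xi_2^\star = \frac{b_4-a_2}{d(d-1)}\In_{\{x_3,x_4\}}$. Then I would compute the cumulative gains $G_0(\xi^\star) = 0$, $G_1(\xi^\star) = \xi_1^\star \Delta S_1$, and $G_2(\xi^\star) = G_1(\xi^\star) + \xi_2^\star \Delta S_2$ scenario by scenario, using $\Delta S_1 = u-1$ on $\{x_1,x_2\}$ and $\Delta S_1 = d-1$ on $\{x_3,x_4\}$, and the corresponding values of $\Delta S_2$. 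The claim is that $H_0 = a_0 = c^\star + G_0(\xi^\star)$ holds on all of $\Omega$; that $H_1 = c^\star + G_1(\xi^\star)$ holds on $\{x_3,x_4\}$ (where $\omega_1 \ne 0$) — indeed $c^\star + G_1 = a_0 + \frac{a_2-a_0}{d-1}(d-1) = a_2 = H_1$ there; and that $H_2 = b_4 = c^\star + G_2(\xi^\star)$ holds on $\{x_4\}$ (where $\omega_2 \ne 0$) — on $\{x_3,x_4\}$ we have $c^\star + G_1 = a_2$, and adding $\xi_2^\star \Delta S_2$ evaluated on $\{x_4\}$ with $\Delta S_2 = d^2 - d = d(d-1)$ gives $a_2 + \frac{b_4-a_2}{d(d-1)} \cdot d(d-1) = b_4$. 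This is exactly the statement that the minimum hedging error $V(c^\star)$ vanishes on the support of the weights, which is consistent with $Z_1 = Z_2 = 0$ and the value-function formula \eqref{eq:value}.

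\medskip

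\textbf{Part (b): arbitrage in the stopped market.} The stopped price process is $S^\tau_n = S_{n\wedge\tau}$. Since $\tau = 0$ on $\{x_1,x_2\}$, the stopped process is frozen at $S_0 = 1$ there from time $0$ onward; on $\{x_3,x_4\}$ it moves to $d$ at time $1$, and then on $\{x_3\}$ ($\tau = 1$) it freezes at $d$ while on $\{x_4\}$ ($\tau = 2$) it continues. To exhibit arbitrage I would construct a self-financing strategy on $S^\tau$ with zero cost, nonnegative terminal wealth, and strictly positive terminal wealth on some scenario. The natural candidate exploits the first period: $\Delta S^\tau_1$ equals $u - 1 > 0$ on $\{x_1,x_2\}$ but the stopped gain there is multiplied into a frozen position — more carefully, $\Delta S^\tau_1 = 0$ on $\{x_1,x_2\}$ (since $S^\tau$ already stopped at time $0$) and $\Delta S^\tau_1 = d - 1 < 0$ on $\{x_3,x_4\}$, so holding $\xi_1 = -1$ share yields a riskless profit $1-d > 0$ on $\{x_3,x_4\}$ and $0$ elsewhere: a classic arbitrage. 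I would then check carefully whether the stopping convention in this paper makes $\Delta S^\tau_1$ vanish on $\{x_1,x_2\}$ or equal $u-1$; if the latter, a sign-adjusted or second-period trade is needed instead, and I would localize the arbitrage to the subtree $\{x_3,x_4\}$ where $S^\tau$ is deterministic after time $1$.

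\medskip

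The main obstacle is not conceptual but a matter of pinning down the exact stopped-process convention (whether $S^\tau_n = S_{\min(n,\tau)}$ with $\tau$ measured against the filtration in which $\{x_1,x_2\}$ is already resolved at time $0$), because the arbitrage strategy and its proof depend on which increments of $S^\tau$ are identically zero versus deterministic and signed. Once that convention is fixed, both parts reduce to finite arithmetic over four atoms.
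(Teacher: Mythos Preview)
Your approach matches the paper's: part~(a) is verified by direct scenario-by-scenario computation from \eqref{eq:op_ex2}, and part~(b) by exhibiting an explicit short position in $S^\tau$ (the paper takes $\phi_1=\phi_2=-1$ and checks $G_2\ge 0$ with $\Pb(G_2>0)=q>0$, but your choice $\xi_1=-1$, $\xi_2=0$ works equally well). Your reading of the stopping convention is the correct one---$\Delta S^\tau_1=0$ on $\{x_1,x_2\}$ and $\Delta S^\tau_1=d-1$ on $\{x_3,x_4\}$---so the alternative you hedge against does not arise.
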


\begin{proof}
	Assertion (a) can be verified by using \eqref{eq:op_ex2} from Corollary \ref{cor:ex2}. 
	To show Assertion (b), take an admissible strategy $\phi=(\phi_1, \phi_2)$ with $\phi_1 = \phi_2 = -1$. Then, by \eqref{eq:G}, we obtain 
	\begin{align}
		G_1&=\phi_1 \Delta S^\tau_1 = (1-d) \In_{\{x_3, x_4\}}\geq 0 ,\\ 
		G_2&=\phi_1 \Delta S^\tau_1 + \phi_2 \Delta S^\tau_2 = (1-d) \In_{\{x_3, x_4\}} + d(1-d) \In_{\{x_4\}} \geq 0,
	\end{align}
	and $\Pb(G_2 > 0) = q > 0 $. Hence, $\phi$ is an arbitrage strategy, which proves Assertion (b).
\end{proof}


\section{Proof of  Lemma \ref{propintegrable1}}
\label{proof:ABCD}

\begin{proof}
	We first prove Assertion (2) by induction. When $n = N$, we get $A_N = D_N = \omega_N \in [0,1]$, so \eqref{AKDKleq} holds trivially. 
	
	Next, suppose \eqref{AKDKleq} is true for all $N, N-1, \cdots, n+1$, where $n < N$.  We need to show that \eqref{AKDKleq} is also true for $n$.  To show this, we compute 
	\begin{align}
		\Eb[D_n | \Fc_n] 
		&= \Eb \big[\omega_n + \sum_{i=n+1}^{N} \omega_i \sum_{j=n+1}^{i} (1 - \beta_j \Delta S_j)^2 \big| \Fc_n \big]  
		\tag*{\text{(by \eqref{eq:ABCD})}}\\
		&=\Eb \big[\omega_n + \left(1 - \beta_{n+1} \Delta S_{n+1}\right)^2 \cdot \Eb[D_{n+1}|\Fc_{n+1}]  \big| \Fc_n \big]  
		\tag*{\text{(by tower rule)}}\\
		&= \omega_n + \Eb \big[ (1- 2 \beta_{n+1}\Delta S_{n+1} ) \underbrace{\Eb[D_{n+1}|\Fc_{n+1}]}_{=\Eb[A_{n+1}|\Fc_{n+1}]} \big| \Fc_n \big]  
		\tag*{\text{(by assumption)}}\\
		&\quad +  \beta_{n+1}^2  \, \underbrace{\Eb \big[\Delta S_{n+1}^2 D_{n+1} | \Fc_{n+1} \big]}_{=\delta_{n+1}}  \tag*{\text{(by \eqref{eq:delta})}} \\
		&=\omega_n + \Eb[A_{n+1}|\Fc_n] - 2 \beta_{n+1} \alpha_{n+1} + \beta_{n+1}^2 \delta_{n+1}  \tag*{\text{(by \eqref{eq:eta})}}
		\\
		&=\omega_n + \Eb[A_{n+1}|\Fc_n] - \beta_{n+1} \alpha_{n+1} \tag*{\text{(by \eqref{eq:beta})}} \\
		&= \Eb[A_n | \Fc_n] .  \tag*{\text{(by \eqref{eq:A_eq})}}
	\end{align}
	Recall $\alpha_{n+1} \beta_{n+1} = \beta_{n+1}^2 \delta_{n+1}$ and $\delta_{n+1} \ge 0$, and $\Eb[A_{n+1} | \Fc_{n+1}] \le N - n$ by assumption, we then have
	\begin{align}
		\Eb[A_n | \Fc_n]  \le \omega_n + \Eb \big[ \Eb[A_{n+1} | \Fc_{n+1}] \big| \Fc_n \big] \le N - n + 1,
	\end{align}
	which, together with the above results, confirms \eqref{AKDKleq} holds for all $n \in \Tc$.
	
	Our next objective is to show Assertion (1). To that end, we deduce
	\begin{align}
		\Eb \left[ A_n^2  | \Fc_n \right] &= \Eb \left(\left(  \sum_{i=n}^{N} \omega_i \prod_{j=n+1}^{i} (1 - \beta_j \Delta S_j)\right)^2 \Bigg \vert {\cal F}_{n} \right)  \tag*{\text{(by \eqref{eq:ABCD})}}\\
		&\le  (N-n+1) \Eb \left(  \sum_{i=n}^{N} \omega_i \prod_{j=n+1}^{i} (1 - \beta_j \Delta S_j)^2 \Bigg \vert {\cal F}_{n} \right)
		\tag*{\text{(By Cauchy-Schwarz and $\omega_i \in [0,1]$)}} \\
		&= (N-n+1) \, \Eb[D_n | \Fc_n]  \tag*{\text{(by \eqref{eq:ABCD})}} \\
		&\le (N - n + 1)^2,  \tag*{\text{(by \eqref{AKDKleq})}}
	\end{align}
	which readily shows $A_n \in \Lc^2(\Pb)$ for all $n \in \Tc$. 
	Using this result, we immediately obtain the locally square integrability of $B$ by 
	\begin{align}
		\Eb[B_n^2 \vert {\cal F}_{n-1}]  = \Eb\big[ A_n^2 \Delta S_n^2 \vert {\cal F}_{n-1}\big] = \Eb \big[ \Delta S_n^2 \, \Eb[A_n^2 | \Fc_n] \vert {\cal F}_{n-1}\big] 
		\le (N - n + 1)^2 \, \Eb \left[\Delta S_n^2 \vert {\cal F}_{n-1} \right] < \infty, \quad \forall \, n \in \Tc,
	\end{align}
	where we have used the fact that $S \in \LPL$. 
	Lastly, to see $C$ is also square integrable, we obtain 
	\begin{align}
		\Eb \left[ C_n^2\right] &= \Eb \left[ \beta_n^2 \Delta S_n^2 A_n^2 \right] = \Eb \left[ \beta_n^2 \Delta S_n^2 \, \Eb[A_n^2 | \Fc_n]  \right]\\
		&\leq (N-n+1)\Eb \left[ \beta_n^2 \Delta S_n^2 \,  \Eb[D_n | \Fc_n]  \right]  \\
		&= (N - n + 1) \, \Eb \left[ \beta_n^2 \delta_n \right] =  (N - n + 1) \, \Eb  \left[\frac{\alpha_n^2}{\delta_n}\right] \tag*{\text{(By \eqref{AKDKleq} and \eqref{eq:delta})}} \\
		&\le (N - n + 1)^2.  \tag*{\text{(By Cauchy-Schwarz and $\omega_i \in [0,1]$)}}
	\end{align}
	The proof is now complete.
\end{proof}

\section{Proof of Corollary \ref{corowelldefined}}
\begin{proof}
	By Jensen's inequality and assertion (1) of   Lemma \ref{propintegrable1}, we deduce that
	\begin{align}
		\alpha_n^2 &\leq \left(\Eb \left[B_n \vert \Fc_{n-1} \right]\right)^2 
		\leq \Eb \left[B_n^2 \vert \Fc_{n-1} \right] <+\infty.
	\end{align}
	By utilizing assertion (2) of  Lemma \ref{propintegrable1} and the definition in \eqref{eq:delta}, we have
	\begin{align}
		\delta_n &= \Eb\left[ \Delta S_n^2 D_n \vert \Fc_{n-1}\right] 
		=\Eb\left[ \Delta S_n^2 \, \Eb [D_n\vert \Fc_{n}] \vert \Fc_{n-1}\right]  
		\leq (N-n+1)\, \Eb\left[ \Delta S_n^2\vert \Fc_{n-1}\right] <+\infty.
	\end{align}
	Using Cauchy-Schwarz inequality, $\omega_i \in [0,1]$, and the definition of $D$ in \eqref{eq:ABCD}, we obtain
	\begin{align}
		\vert \eta_n \vert &\leq \sum_{i=n}^{N} \Eb\bigg[\vert  \Delta S_n \vert \cdot  \bigg\vert  H_i \omega_i \prod_{j=n+1}^{i} (1 - \beta_j \Delta S_j)\bigg\vert\,  \bigg\vert {\cal F}_{n-1}\bigg]\\
		&\leq \sum_{i=n}^{N} \left(\Eb\bigg[ H_i^2 \bigg\vert {\cal F}_{n-1}\bigg]\right)^{1/2} \cdot   \left(\Eb\bigg[  \Delta S_n^2      \omega_i \prod_{j=n+1}^{i} (1 - \beta_j \Delta S_j)^2    \bigg\vert {\cal F}_{n-1}\bigg]\right)^{1/2}\\
		&\leq \sum_{i=n}^{N} \left(\Eb\bigg[ H_i^2 \bigg\vert {\cal F}_{n-1}\bigg]\right)^{1/2} \cdot 
		\left(\Eb\bigg[  \Delta S_n^2     D_n   \bigg\vert {\cal F}_{n-1}\bigg]\right)^{1/2}\\
		& =  \sum_{i=n}^{N} \left(\Eb\bigg[ H_i^2 \bigg\vert {\cal F}_{n-1}\bigg]\right)^{1/2} \cdot 
		\left(\delta_n\right)^{1/2} < +\infty,
	\end{align}
	where the last inequality is due to $H_n \in \LP$ for all $n$.
	The proof is now complete.
\end{proof}

\section{Proof of  Lemma \ref{lem:rho_beta}}
\label{proof:rho_beta}
\begin{proof}
	By definition \eqref{eq:beta}, we readily see $\beta_n \eta_n = \alpha_n \rho_n$, which reads as 
	\begin{align}
		\Eb\left[\beta_n \Delta S_n  \sum_{i=n}^{N} \omega_i H_i \prod_{j=n+1}^{i} (1 - \beta_j \Delta S_j) \Bigg\vert {\cal F}_{n-1}\right] = \Eb\left[\rho_n \Delta S_n \sum_{i=n}^{N} \omega_i   \prod_{j=n+1}^{i} (1 - \beta_j \Delta S_j) \Bigg\vert {\cal F}_{n-1}\right],
	\end{align}
	Using the above result, we derive
	\begin{align}
		\text{l.h.s.}
		&= \Eb \left[ \Delta S_n \left(\sum_{j=n+1}^{N}  \rho_j\Delta S_j  \left(  \sum_{i=j}^{N}  \omega_i \prod_{k=j+1}^{i} \left(1 - \beta_k \Delta S_k \right)\right) \right) \Bigg\vert {\cal F}_{n-1}\right] \\
		&= E\left[ \Delta S_n \left(\sum_{j=n+1}^{N}  \beta_j\Delta S_j  \left(  \sum_{i=j}^{N}  \omega_i H_i \prod_{k=j+1}^{i} \left(1 - \beta_k \Delta S_k \right)\right) \right) \Bigg\vert {\cal F}_{n-1}\right] 
		=\text{r.h.s.}
	\end{align}
	The proof is now complete.
\end{proof}

\end{document}